\journalname}
\newcommand{\0}{\mathbf{0}}
\newcommand{\I}{\mathbf{I}}
\newcommand{\ep}{\mathtt{e}}
\newcommand{\df}{\mathrm{d}}
\renewcommand{\min}{\mathtt{min}}
\newcommand{\diag}{\mathtt{diag}}
\renewcommand{\top}{\mathsf{T}}
\newcommand{\R}{\mathbb{R}}
\DeclareRobustCommand{\ubar}[1]{\underaccent{\bar}{#1}}
\theoremstyle{definition}
\newtheorem{problem}{\bf Problem}
\newtheorem{lemma}{\bf Lemma}
\newtheorem{definition}{\bf Definition}
\newtheorem{proposition}{\bf Proposition}
\DeclareRobustCommand{\uppartial}{\text{\rotatebox[origin=t]{20}{\scalebox{0.95}[1]{$\partial$}}}\hspace{-1pt}}
\DeclarePairedDelimiter{\norm}{\lVert}{\rVert}
\DeclarePairedDelimiter{\abs}{\lvert}{\rvert}
\title{Safety Control of Uncertain MIMO Systems Using Dynamic Output Feedback Barrier Pairs}
\author{Binghan He and Takashi Tanaka
\thanks{This work was supported by the Air Force Office of Scientific Research under Grant FA9550-20-1-0101. \emph{(Corresponding author: Binghan He.)}}
\thanks{Binghan He was with the Oden Institute for Computational Engineering and Sciences, The University of Texas at Austin, Austin, TX 78712 USA. He is now with the Department of Electrical Engineering and Computer Sciences, University of California, Berkeley, CA 94720 USA (e-mail: binghan.he@berkeley.edu).}
\thanks{Takashi Tanaka is with the Department of Aerospace Engineering and Engineering Mechanics, The University of Texas at Austin, Austin, TX 78712 USA (e-mail: ttanaka@utexas.edu).}}
\newcommand\copyrighttext{\footnotesize\sf This article has been accepted for publication in IEEE Transactions on Automatic Control. \copyright \ 2024 IEEE. DOI: 10.1109/TAC.2024.3462288}
\newcommand\copyrightnotice{\begin{tikzpicture}[remember picture,overlay] \node[anchor=north, xshift=0pt] at (current page.north) {\fbox{\parbox{\dimexpr\textwidth-\fboxsep-\fboxrule\relax}{\copyrighttext}}}; \end{tikzpicture} \vspace{-10pt}}
\begin{document}

\maketitle

\copyrightnotice

\begin{abstract}
Safety control of dynamical systems using barrier functions relies on knowing the full state information. 
This paper introduces a novel approach for safety control in uncertain MIMO systems with partial state information. 
The proposed method combines the synthesis of a vector norm barrier function and a dynamic output feedback safety controller to ensure robust safety enforcement. 
The safety controller guarantees the invariance of the barrier function under uncertain dynamics and disturbances. 
To address the challenges associated with safety verification using partial state information, a barrier function estimator is developed. 
This estimator employs an identifier-based state estimator to obtain a state estimate that is affine in the uncertain model parameters of the system. 
By incorporating a priori knowledge of the limits of the uncertain model parameters and disturbances, the state estimate provides a robust upper bound for the barrier function. 
Comparative analysis with existing control barrier function based methods shows the advantage of the proposed approach in enforcing safety constraints under tight input constraints and the utilization of estimated state information.
\end{abstract}

\section{Introduction}

For dynamical systems that involve disturbances and uncertain dynamics, such as a wearable robot coupled with time-varying human dynamics \cite{thomas2021formulating} or a self-driving vehicle deployed in an uncertain environment \cite{sadigh2018planning}, safety is a critical issue. 
While conventional robust and adaptive control techniques have demonstrated effectiveness in controlling these uncertain dynamical systems \cite{ioannou1996robust}, recent advancements in reinforcement learning, iterative learning, and leaning-based model predictive control hold promise for enhanced adaptation and performance \cite{kabzan2019learning, ahn2020data, cheng2023practice}. 
However, the application of these methods introduces potential risks and safety concerns \cite{brunke2022safe}. 
Morevover, in systems involving human-in-the-loop, it is crucial to consider the control authority and performance requirements of the human operator \cite{alonso2018system}. 
To strike a balance between safety and performance, a dedicated safety control framework is required to selectively correct unsafe control commands, ensuring safety while accommodating learning-based control methods and respecting human intentions.

Typically, barrier functions are employed to ensure the satisfaction of safety constraints \cite{prajna2004safety} or establish a safety controller \cite{wieland2007constructive} for a dynamical system. 
However, determining barrier function values often requires full state information, which is not always available for uncertain dynamical systems. 
To address these challenges, we propose a novel safety control framework specifically designed for multiple-input multiple-output (MIMO) systems. 
Our framework tackles the safety control problem for an uncertain MIMO system with only partial state information, overcoming the limitations of existing approaches.

\subsection{Related Works}

While there are multiple methods to synthesize a barrier function and a safety controller that enforces the barrier function, many of them \cite{tee2009barrier, liu2016barrier} consider the barrier function to be a Lyapunov function.
In particular, the Lyapunov function needs to provide an invariant set that is also a subset of the safe region in state space.
If the full state information of the system is available, we can formulate the barrier function as a quadratic Lyapunov function and the safety controller as a full state feedback controller \cite{hu2003composite, thomas2018safety}. 
Then, the synthesis of the barrier function and safety controller can be expressed as a linear matrix inequality (LMI) optimization problem \cite{boyd1994linear}, which turns the safety constraints and the Lyapunov stability criterion into a set of LMI constraints.
Although this type of synthesis method requires a linear time-invariant system model, we can also apply it to a nonlinear or time-varying system by formulating the system as a linear model in feedback with a polytopic or norm-bound uncertainty \cite{thomas2019quadric}.

For uncertain systems with partial state information, a controller parameter transformation method has been proposed in \cite{scherer1997multiobjective}, which converts bilinear matrix inequality (BMI) constraints for synthesizing a dynamic output feedback controller into LMI constraints.
This controller parameter transformation method can be adopted for solving $\mathrm{H_2}$, $\mathrm{H_{\infty}}$ \cite{doyle1989state}, and bounded-input bounded-output (BIBO) \cite{dahleh1988necessary} control problems.
However, the dynamic output feedback controller synthesized using the method in \cite{scherer1997multiobjective} does not guarantee the inclusion of a state observer, which plays an essential role in detecting the potential safety issues in a system with partial state information.
In fact, the synthesis of an observer-based controller is a more difficult problem. 
This type of dynamic output feedback controller is usually obtained by directly solving a BMI problem \cite{lens2008observer} or converting the BMI problem to an LMI problem through a relaxation that causes additional conservatism \cite{lien2004robust}.

Using a state observer, a residual signal or a limit monitoring function in terms of the residual signal can be generated to detect a fault signal that causes the malfunction of an uncertain system \cite{ding2003threshold}.
Based on the controller parameter transformation method in \cite{scherer1997multiobjective}, an LMI approach is proposed in \cite{li2012dynamic} for finding the limit monitoring function using a dynamic observer \cite{park2002dynamic}, which is a state observer in feedback with an additional dynamic sub-system. 
However, unlike the barrier functions, the residual signals and the limit monitoring functions in \cite{ding2003threshold, li2012dynamic} do not detect the potential safety issues in relation to any state space constraints.

The combination of the control barrier function (CBF) framework with quadratic programming (QP) has emerged as a popular approach for enforcing safety in constrained control systems. 
Numerous studies, including \cite{xu2015robustness, nguyen2016exponential, ames2017control, xu2018constrained}, have demonstrated the effectiveness of this method in ensuring safety. 
However, despite its advantages, the CBF-based QP approach encounters specific challenges when dealing with tight input constraints or when utilizing estimated state information. 
Particularly, under conditions of input saturation, the method may struggle to enforce the desired constraints, even when provided with accurate full state information. 
Additionally, the inclusion of estimated state information introduces further complexities, as evidenced in \cite{wang2022observer}, where the conventional CBF-based QP method fails to satisfy safety limits in the absence of input saturation. 
These limitations underscore the need for advancements in safety control methodologies capable of effectively managing input constraints and ensuring robust enforcement of safety constraints, even in the presence of uncertainties and disturbances.

In \cite{he2020robust}, the barrier function of an uncertain dynamical system is defined as a vector norm function. 
Through the triangle inequality of a vector norm function, we can find a barrier function's upper bound, which is in terms of a state estimate obtained from a state observer.
In \cite{casavola2005robust}, a similar approach is used to estimate the upper bound for a limit monitoring function that detects a fault signal.
By finding the upper bound for the vector norm barrier function, we can figure out when to trigger a safety controller to avoid the potential safety issues of the original system.
The main reason why the upper bound for the vector norm barrier function can be found in \cite{he2020robust} is because the state observer is in the form of an identifier-based estimator \cite{morse1980global}. 
This special type of state observer was originally developed for the purpose of system identification and adaptive control \cite{datta1996adaptive, morse1996supervisory}.
As a byproduct, it also provides a robust state estimate under the presence of model uncertainty.

To form the identifier-based estimator in \cite{morse1980global}, we need to convert the state space system model into an observable canonical form. 
While finding an observable canonical form for a single-input single-output (SISO) system is straightforward, there is no guarantee that a minimal realization in the observable canonical form also exists for a MIMO system. 
In \cite{morse1994mimo}, a MIMO identifier-based estimator is proposed based on an output injection canonical form, which provides minimal realizations for MIMO systems \cite{luenberger1967canonical}. 
While converting a state space system model into an output injection canonical form does not guarantee the preservation of the physical meaning from the original model, system identification methods in \cite{pan1996parameter, romano2016matchable} offer paradigms for modeling an uncertain MIMO system in output injection canonical form.
However, the state estimate generated using the identifier-based estimator cannot be directly used to form an observer-based dynamic output feedback controller.
Therefore, the safety control synthesis method in \cite{he2020robust} requires the stability of the uncontrolled system or the existence of a stable static output feedback controller \cite{crusius1999sufficient}. 

\subsection{Contributions}

Inspired by these previous works, we propose our new safety control framework, which aims to solve the safety control problem for an uncertain MIMO system with partial state information. 
The main contributions of this paper are summarized as follows.
\begin{itemize}
\item[(1)] 
To enforce the safety constraints of the uncertain MIMO system, we develop a synthesis method that simultaneously creates a vector norm barrier function and a dynamic output feedback safety controller. 
Using the controller parameter transformation scheme in \cite{scherer1997multiobjective}, this safety controller guarantees the invariance of our barrier function with bounded model uncertainty and disturbance.
While both our synthesis method and the approach in \cite{li2012dynamic} employ a similar parameter transformation scheme, they diverge in their objectives. 
The method proposed in \cite{li2012dynamic} primarily emphasizes state estimation and fault detection, neglecting the enforcement of state space constraints. 
In contrast, our approach prioritizes the enforcement of state space constraints, while employing a dedicated state estimator within our safety control framework to handle the state estimation task separately.
\item[(2)] 
To detect the potential safety issues of the uncertain MIMO system, we develop a robust barrier function estimator that incorporates a MIMO identifier-based estimator inspired by \cite{morse1994mimo}.
Similar to the method in \cite{he2020robust}, this identifier-based estimator provides us with a robust state estimate, which helps us find the upper bound for our vector norm barrier function.
Expanding upon the work of \cite{morse1994mimo}, we introduce an LMI approach to synthesize the MIMO identifier-based estimator, aiming to achieve a tight upper bound for our vector norm barrier function.
\item[(3)] 
Through comprehensive simulation examples, we illustrate the remarkable capabilities of our approach in detecting potential safety violations and appropriately engaging the safety controller to mitigate risks. 
The key advantage lies in the robust triggering mechanism of our barrier function estimator, which enables the safety controller to intervene effectively when necessary, ensuring robust compliance with safety constraints. 
Notably, this stands in contrast to CBF-based methods, such as \cite{xu2015robustness, nguyen2016exponential, ames2017control, xu2018constrained}, which may face challenges in handling tight input constraints or utilizing estimated state information, leading to potential safety lapses.
\end{itemize}
In the preliminary version of this work \cite{he2023barrier}, we introduced our safety control framework for uncertain SISO systems with partial state information. 
This paper builds significantly upon our results in \cite{he2023barrier}. 
First, this paper extends our safety control framework from uncertain SISO systems in \cite{he2023barrier} to uncertain MIMO systems. 
In particular, this paper shows how to find an upper bound for our barrier function using a MIMO identifier-based estimator for an uncertain MIMO system.
Second, this paper proposes a composite barrier function consisting of multiple vector norm barrier functions for the closed-loop system with our safety controller. 
Compared to the single vector norm barrier function in \cite{he2023barrier}, the composite barrier function in this paper provides a larger invariant set inside the safe state-space region.

\section{Preliminaries}

In this section, we introduce models of an uncertain MIMO system $\Sigma_p$ and a full-order dynamic output feedback controller $\Sigma_k$. 
The uncertain MIMO system is modeled as a diagonal norm-bound linear differential inclusion (DNLDI) \cite{boyd1994linear}. 
Then we give an overview of barrier pairs, which will be used later for safety verification and control of the uncertain MIMO system. 
Based on the concept of barrier pairs, we present our formal problem statement.

\subsection{Notations} \label{sec:notations}

In this paper, we define $S_{\star}$ as a selection matrix, where $\star$ represents a specific subscript. 
For an $n$-th order MIMO system $\Sigma_p$, we define two specific selection matrices as 
\begin{equation*}
S_p \triangleq \begin{bmatrix} \I_{n \times n} & \0_{n \times n} \end{bmatrix}, \quad
S_k \triangleq \begin{bmatrix} \0_{n \times n} & \I_{n \times n} \end{bmatrix},
\end{equation*}
which extract the plant state $x_p \in \R ^ {n}$ and the controller state $x_k \in \R ^ {n}$ from the closed-loop state vector $x_\mathtt{CL} \in \R ^ {2n}$.

In \cite{morse1994mimo, pan1996parameter}, a canonical form of state space model for the MIMO system $\Sigma_p$ is derived according to a set of observability indices $\{ n_1, \, n_2, \, \dots, \, n_{n_y} \}$ for $\Sigma_p$, where $n = \sum_{i=1}^{n_y} n_i$. 
Based on this set of $n_{y}$ observability indices, $\Sigma_p$ is partitioned as $n_y$ coupled sub-systems $\Sigma_{p}^{(1)}, \, \Sigma_{p}^{(2)}, \, \dots, \, \Sigma_{p}^{(n_y)}$. 
In accordance with the notations introduced in \cite{pan1996parameter}, we adopt a superscript $(i)$ to delineate variables and model parameters associated with the $i$-th partition $\Sigma_p^{(i)}$ of $\Sigma_p$, for $i = 1, \, 2, \, \dots, \, n_y$. 

For a vector $x \in \mathbb{R}^{n_{x}}$, we define $x_{(j)} \in \R$ as its $j$-th element. 
For a matrix $A_x \in \mathbb{R}^{n_{A} \times n_{x}}$, which is post-multiplied by $x$, we define $A_{x(j)} \in \R ^ {n_{A} \times 1}$ as its $j$-th column, such that $A_{x} x = \sum_{j=1}^{n_{x}} A_{x(j)} x_{(j)}$.

Supposing $P \succ \0$, we define
\begin{equation*} \notag
\norm{\star}_{P} \triangleq \sqrt{\star ^ \top P \star}
\end{equation*}
as a vector norm function based on $P$. 
In our LMIs, we define $\mathtt{He} \{ \star \} \triangleq \star + \star ^ \top$. 
In a matrix $\begin{bmatrix} A & \star \\ B & C \end{bmatrix}$, the notation $\star$ in the upper right corner represents the ellipsis indicating terms induced by symmetry.

\subsection{State Model} \label{sec:model}

First of all, let us consider an $n$-th order controllable and observable system $\Sigma_p$, featuring $n_u$ inputs, $n_y$ outputs, and a set of observability indices $\{ n_1, \, n_2, \, \dots, \, n_{n_y} \}$, where $n \geq n_y$ and $n = \sum_{i=1}^{n_y} n_i$. 
We define 
\begin{equation}
\begin{aligned}
& \bar{A} \triangleq \diag \big\{ \bar{A}^{(1)}, \, \bar{A}^{(2)}, \, \dots, \, \bar{A}^{(n_y)} \big\} \\ 
& \bar{C} \triangleq \diag \big\{ \bar{C}^{(1)}, \, \bar{C}^{(2)}, \, \dots, \, \bar{C}^{(n_y)} \big\}
\end{aligned}
\end{equation}
where $\bar{A}^{(i)} \in \R ^ {n_i \times n_i}$ and $\bar{C}^{(i)} \in \R ^ {1 \times n_i}$ are
\begin{equation}
\begin{aligned}
& 
\bar{A}^{(i)}
\triangleq 
\begin{bmatrix*}[l] 
0 & \cdots & 0 & 0 \\
1 & \cdots & 0 & 0 \\
\vdots & \ddots & \vdots & \vdots \\
0 & \cdots & 1 & 0
\end{bmatrix*} \\
&
\bar{C}^{(i)} 
\triangleq 
\begin{bmatrix} 0 & \cdots & 0 & 1 \end{bmatrix}.
\end{aligned}
\end{equation}
for $i = 1, \, 2, \, \cdots, \, n_y$.
Then, the state-space model of the system $\Sigma_p$ can be expressed as
\begin{equation} \label{eq:model-0} 
\begin{aligned}
\dot{x}_{p} & = (\bar{A} + \Theta_{y} \Theta_{c} \bar{C}) \, x_{p} + \Theta_{u} \, u + \, w, \\ 
y  & = \Theta_{c} \bar{C} \, x_{p} + v,
\end{aligned}
\end{equation}
where $x_{p} \in \R ^ {n}$, $u \in \R ^ {n_u}$, and $y \in \R ^ {n_y}$ are the vectors of states, inputs, and outputs, $w \in \R ^ {n}$, and $v \in \R ^ {n_y}$ are the vectors of disturbance and measurement noise, $\Theta_y \in \R ^ {n \times n_y}$, $\Theta_u \in \R ^ {n \times n_u}$, and $\Theta_c \in \R ^ {n_y \times n_y}$ are uncertain parameter matrices.
If we define $h \triangleq \begin{bmatrix} u \\ y \end{bmatrix} \in \R ^ {n_h}$ and $d \triangleq \begin{bmatrix} w \\ v \end{bmatrix} \in \R ^ {n_d}$ for $n_h \triangleq n_u + n_y$ and $n_d \triangleq n + n_y$, \eqref{eq:model-0} becomes
\begin{equation} \label{eq:model}
\begin{aligned}
\dot{x}_{p} & = \bar{A} x_{p}  
+ 
\overbracket{
\setlength\arraycolsep{2pt}
\begin{bmatrix} \Theta_{u} & \Theta_{y} \end{bmatrix}
}^{\Theta_{h}}
h
+
\setlength\arraycolsep{2pt}
\begin{bmatrix} \I & - \Theta_{y} \end{bmatrix}
d,
\\
y  & = \Theta_{c} \bar{C} \, x_{p} + v,
\end{aligned}
\end{equation}
where the output $y$ is injected into a combined input $h$.  
In \cite{morse1994mimo}, the state space model described by \eqref{eq:model-0} is referred to as the output injection canonical form. 
As detailed in \cite[Theorem 1]{morse1994mimo}, a key procedure for expressing $\Sigma_p$ in the structure of \eqref{eq:model-0} is to define a non-singular matrix of $\Theta_{c}$, ensuring the existence of its inverse denoted as $\Theta_{a} \triangleq \Theta_{c}^{-1}$. 
As a sufficient but not necessary condition, such a non-singular matrix of $\Theta_{c}$ is guaranteed to exist for all controllable and observable systems $\Sigma_p$ with $n \geq n_y$ if $\Theta_{c}$ is defined as a unit lower triangular matrix.

The method outlined in \cite{pan1996parameter} identifies the models of uncertain MIMO systems in output injection canonical form, estimating a set of $n_{\uptheta}$ scalar parameters that define $\Theta_{h}$ and $\Theta_{a}$. 
Using this approach, we determine the parameter uncertainties for $\Theta_{h}$ and $\Theta_{a}$ as
\begin{equation} \label{eq:ths-qip-1}
\begin{aligned}
\Theta_{h} & = A_{h} + B_{h} \Delta_{h} C_{h}, \quad\quad \Theta_a = A_{a} + B_{a} \Delta_{a} C_{a}, \\
\Delta_{h} & \triangleq \diag \big\{ \updelta_{h(1)}, \, \updelta_{h(2)}, \, \dots, \, \updelta_{h(n_{\uptheta}^{h})} \big\} \in \R ^ {n_{\uptheta}^{h} \times n_{\uptheta}^{h}} \\
\Delta_{a} & \triangleq \diag \big\{ \updelta_{a(1)}, \, \updelta_{a(2)}, \, \dots, \, \updelta_{a(n_{\uptheta}^{a})} \big\} \in \R ^ {n_{\uptheta}^{a} \times n_{\uptheta}^{a}}
\end{aligned}
\end{equation}
where $n_{\uptheta}^{a} + n_{\uptheta}^{c} = n_{\uptheta}$, $\abs{\updelta_{i(j)}} \leq 1$ for all $i = h, \, a$ and $j = 1, \, 2, \, \dots, \, {n_{\uptheta}^{i}}$. 
In \eqref{eq:ths-qip-1}, $A_{i}$ denotes the nominal value of $\Theta_i$, and $B_{i}$ and $C_{i}$ are matrices quantifying the magnitude of bounded uncertainty for $i = h, \, a$. 
In addition, by partitioning $A_{h}$ and $C_{h}$ as $A_{h} = \begin{bmatrix} A_{u} & A_{y} \end{bmatrix}$ and $C_{h} = \begin{bmatrix} C_{u} & C_{y} \end{bmatrix}$, we derive
\begin{equation} \label{eq:ths-qip-2}
\begin{aligned}
& \Theta_u = A_{u} + B_{h} \Delta_{h} C_{u}, \quad && \Theta_y = A_{y} + B_{h} \Delta_{h} C_{y}.
\end{aligned}
\end{equation}
While $\Theta_{c} = \Theta_{a} ^ {-1}$ is not directly identified through \cite{pan1996parameter}, an LMI method in \cite{el2002inversion} defines  $A_{c} \triangleq A_{a} ^ {-1}$ and minimizes the upper bounds of $B_{c}$ and $C_{c}$ in
\begin{equation} \label{eq:ths-qip-3}
\begin{aligned}
\Theta_{c} & = A_{c} + B_{c} \Delta_{a} C_{c}, 
\end{aligned}
\end{equation}
based on the identified values of $A_{a}$, $B_{a}$, and $C_{a}$ in \eqref{eq:ths-qip-1}.
Substituting \eqref{eq:ths-qip-2} and \eqref{eq:ths-qip-3} into \eqref{eq:model}, we obtain a DNLDI model of $\Sigma_p$.

Next, let the state-space model of a full-order dynamic output feedback controller $\Sigma_k$ be
\begin{alignat}{1}
\dot{x}_{k} & = A_k x_{k} + B_k y \label{eq:controller-siso} \\
          u & = C_k x_{k} \notag 
\end{alignat}
where $x_k \in \R ^ {n}$ is the controller state vector and $A_k \in \R ^ {n \times n}$, $B_k \in \R ^ {n \times n_y}$, and $C_k \in \R ^ {n_u \times n}$ are the controller parameters to be determined.

Finally, we combine $\Sigma_p$ and $\Sigma_k$ to get our closed-loop system $\Sigma_\mathtt{CL}$
\begin{alignat}{3}
\dot{x}_\mathtt{CL} 
& 
= 
\overbracket{
\begin{bmatrix*}
A_{p}               & A_{u} C_{k} \\
B_{k} A_{c} \bar{C} & A_{k}
\end{bmatrix*}
}^{A_\mathtt{CL}}
&& 
x_\mathtt{CL}
+ 
\begingroup
\setlength\arraycolsep{2.0pt}
\overbracket{
\begin{bmatrix}
\I &  \0  & B_{h} & A_{y} B_{c} \\ 
\0 & B_{k}  & \0    & B_{k} B_{c} 
\end{bmatrix}
}^{B_\mathtt{CL}}
\endgroup
&&
\begin{bmatrix}
d \\ p 
\end{bmatrix}
\notag \\
q                   
& 
= 
\begingroup
\underbracket{
\begin{bmatrix*}
C_{y} A_{c} \bar{C} & C_{u} C_{k} \\
C_{c} \bar{C} & \0
\end{bmatrix*}
}_{C_\mathtt{CL}}
\endgroup
&& 
x_\mathtt{CL} 
+ 
\begingroup
\setlength\arraycolsep{4.4pt}
\underbracket{
\begin{bmatrix}
\0 &  \0  & \0    & C_{y} B_{c}  \\
\0 & \0     & \0    & \0 
\end{bmatrix}
}_{D_\mathtt{CL}} 
\endgroup
&&
\begin{bmatrix}
d \\ p 
\end{bmatrix}
\notag  \\
p                   
& 
= 
\begin{bmatrix} 
\Delta_{h} & \0 \\ 
\0 & \Delta_{a} 
\end{bmatrix}
q 
\label{eq:model-CL}
\end{alignat}
where $A_{p} \triangleq \bar{A} + A_{y} A_{c} \bar{C}$, $x_\mathtt{CL} \triangleq \begin{bmatrix} x_{p} \\ x_{k} \end{bmatrix} \in \R ^ {2 n}$ is the close-loop state vector, $p \triangleq \begin{bmatrix} p_{h} \\ p_{c} \end{bmatrix} \in \R ^ {n_{\uptheta}}$ and $q \triangleq \begin{bmatrix} q_{h} \\ q_{c} \end{bmatrix} \in \R ^ {n_{\uptheta}}$ are the vectors of disturbance and output signals related to the parameter uncertainty of $\Theta_{h}$ and $\Theta_{c}$.

\subsection{Barrier Pairs}

The concept of barrier pairs \cite{thomas2018safety} originally describes the relationship of a barrier function and a full state feedback controller in a safety control problem. 
In this paper, we extend the definition of barrier pairs to output feedback systems.
\begin{definition} \label{def:bp}
For a dynamical system $\Sigma_p$ with input $u$ and output $y$, a barrier pair is a pair $(\mathbf{B},\ \Sigma_k)$ consisting of a barrier function $\mathbf{B}: \R ^ {n + n_k} \rightarrow \R$ and a feedback controller $\Sigma_k$ that satisfy the following conditions:
\begin{itemize}
\item[(a)] $\upvarepsilon \leq \mathbf{B} (x_\mathtt{CL}) \leq 1, \, u = \Sigma_k (y) \implies \dot{\mathbf{B}} (x_\mathtt{CL}) < 0$,
\vspace{3pt}
\item[(b)] $\mathbf{B} (x_\mathtt{CL}) \leq 1 \implies x_p \in \mathcal{X}_s, \ \Sigma_k (y) \in \mathcal{U}$,
\vspace{1pt}
\end{itemize}
where $x_p \in \mathcal{X}_s$ and $u \in \mathcal{U}$ represent the state and input constraints for the system $\Sigma_p$, $n$ and $n_k$ denote the orders of $\Sigma_p$ and $\Sigma_k$ respectively, $x_\mathtt{CL} \in \R ^ {n + n_k}$ denotes the close-loop state vector, and $0 \leq \upvarepsilon < 1$ determines the size of a residue set $\{ x_\mathtt{CL}: \ \mathbf{B} (x_\mathtt{CL}) \leq \upvarepsilon \}$.
\end{definition}
Definition~\ref{def:bp} generalizes the definition of the barrier pairs provided in \cite{thomas2018safety}. 
However, if we assign $n_k = 0$ and consider $y = x_p$ where $x_p$ represents the full state of $\Sigma_p$, Definition~\ref{def:bp} is transformed into an equivalent form of the barrier pair definition presented in \cite{thomas2018safety}.
In this paper, we specifically consider the case where $n_k = n$ as described in \eqref{eq:controller-siso}.

\subsection{Problem Statement}

\begin{figure}
\centering
\includegraphics[width=0.50\textwidth]{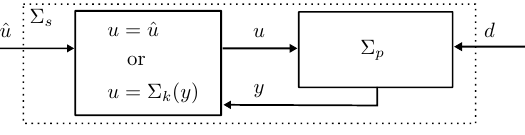}
\caption{In this paper, our switching system $\Sigma_s$ chooses either an original input $u = \hat{u}$ of $\Sigma_p$ or a known-to-be-safe input $u = \Sigma_k (y)$ based on a barrier function $\mathbf{B}$.}
\label{fig:problem}
\end{figure}

In this paper, we consider an uncertain dynamical system $\Sigma_p$ described in \eqref{eq:model-0} with the assumption that $\Sigma_p$ is observable and controllable for all $x_p \in \mathcal{X}_s$, the assumption $d \in \mathcal{D}$ and the safety constraints $x_p \in \mathcal{X}_s$ and $u \in \mathcal{U}$. 
The sets $\mathcal{X}_s$, $\mathcal{U}$, and $\mathcal{D}$ are defined as
\begin{alignat}{7}
& \mathcal{X}_s && \triangleq \{x_p \ & : \ && \abs{f_{i} ^ \top x_p} & \leq 1, && \ i = 1, \, 2, \, \cdots, \, n_s && \}, \notag \\
& \mathcal{U} && \triangleq \{ u \ & : \ && \abs{u_{(i)}} & \leq \bar{u}_{(i)}, && \ i = 1, \, 2, \, \cdots, \, n_u && \}, \label{eq:bars} \\
& \mathcal{D} && \triangleq \{ d \ & : \ && \abs{d_{(i)}} & \leq \bar{d}_{(i)}, && \ i = 1, \, 2, \, \cdots, \, n_d && \}. \notag
\end{alignat}
where $\mathcal{X}_s$ enforces $n_s$ state-space safety constraints on the system $\Sigma_p$, the vector $f_{i} \in \R ^ {n \times 1}$ denotes a hyperplane boundary condition that defines each constraint, and the sets $\mathcal{U}$ and $\mathcal{D}$ restrict the absolute values of the input $u_{(i)}$ and disturbance $d_{(i)}$ within specified bounds $\bar{u}_{(i)}$ and $\bar{d}_{(i)}$ for $i = 1, \, 2, \, \cdots, \, n_u$ and $i = 1, \, 2, \, \cdots, \, n_d$ respectively.
\begin{problem}
Assuming that $y$ and $u$ are the only available measurements of $\Sigma_p$ and $x_p = \0$ at $t = 0$, synthesize a barrier pair $(\mathbf{B}, \, \Sigma_k)$ and a switching system $\Sigma_s$ that satisfy the following requirements.
\begin{itemize}
\item[(1)] $(\mathbf{B}, \, \Sigma_k)$ satisfies conditions (a) and (b) defined in Definition~\ref{def:bp} for the uncertain system $\Sigma_p$.
\item[(2)] $\Sigma_s$ switches the input of $\Sigma_p$ from its original value $u = \hat{u}$ to $u = \Sigma_k(y)$ when the barrier function $\mathbf{B}$ approaches $1$. Conversely, it switches the input back to $u = \hat{u}$ when $\mathbf{B}$ deviates from $1$.
\end{itemize}
\end{problem}

Fig.~\ref{fig:problem} shows a block diagram of $\Sigma_s$ that is similar to the switching systems proposed in \cite{wieland2007constructive, thomas2018safety}. 
This type of switching system seeks to maintain the safety of $\Sigma_p$ through a minimum intervention in its original input $u = \hat{u}$. 
However, unlike the previous works, the full state of $\Sigma_p$ is not available in our problem. 
Therefore, our switching system cannot switch between $u = \hat{u}$ and $u = \Sigma_k (y)$ based on the exact value of our barrier function $\mathbf{B}$. 

To address this limitation, we will solve our problem in two steps. 
In Sec.~\ref{sec:safety-control-0}, we show a synthesis method that creates a barrier pair for $\Sigma_p$. 
In Sec.~\ref{sec:fault-detection-0}, we propose a robust barrier function estimator that finds an upper bound for $\mathbf{B}$ using only the measurements of $y$ and $u$. 
Based on this upper bound for $\mathbf{B}$, we will construct $\Sigma_s$ in Sec.~\ref{sec:switching}.

\section{Barrier Pair Synthesis} \label{sec:safety-control-0}

In this section, we focus on the barrier pair synthesis for our system $\Sigma_p$ described in \eqref{eq:model-0}. 
First we formulate conditions (a) and (b) in Definition~\ref{def:bp} as linear matrix inequality (LMI) constraints. 
Then we introduce our LMI optimization problems for barrier pair synthesis.

\subsection{LMIs for Invariance Property} \label{sec:invariance}

Let us first consider the case in which there already exists a dynamic output feedback controller $\Sigma_k$ in the form of \eqref{eq:controller-siso}. 
In Proposition~\ref{prop:proposition-1}, we will explain how to find a barrier function $\mathbf{B}$ such that $(\mathbf{B}, \, \Sigma_k)$ satisfies condition (a) in Definition~\ref{def:bp}.

\begin{proposition} \label{prop:proposition-1}
Suppose that $\Sigma_k$ is a controller in the form of \eqref{eq:controller-siso}.
If there exist $P \succ \0$, $\upmu_{d(1)}, \ \upmu_{d(2)}, \ \dots, \ \upmu_{d(n_d)} \geq 0$ and $\upmu_{p(1)}, \ \upmu_{p(2)}, \ \dots, \ \upmu_{p(n_{\uptheta})} > 0$ such that
\begin{equation} \label{eq:LMI-lyapunov-0}
\begin{bmatrix*}
A_\mathtt{CL}^\top P + P A_\mathtt{CL} + \frac{\upmu_\mathtt{CL}}{\upvarepsilon^2} P & \star & \star \\
B_\mathtt{CL}^\top P & -\diag \{ M_d, M_p \} & \star \\
M_{p} C_\mathtt{CL} & M_{p} D_\mathtt{CL} & - M_{p}
\end{bmatrix*}
\prec \mathbf{0},
\end{equation}
where 
$M_d \triangleq \diag \{ \upmu_{d(1)}, \ \upmu_{d(2)}, \ \dots, \ \upmu_{d(n_d)} \}$,
$M_p \triangleq \diag \{ \upmu_{p(1)}, \ \upmu_{p(2)}, \ \dots, \ \upmu_{p(n_{\uptheta})} \}$,
and
$\upmu_\mathtt{CL} = \sum_{i} ^ {n_d} \upmu_{d(i)} \bar{d}_{(i)} ^ 2$,
then
\begin{equation} \label{eq:barrier-function}
\mathbf{B} (x_\mathtt{CL}) \triangleq \norm{x_\mathtt{CL}}_{P},
\end{equation}
is a barrier function
and
$( \mathbf{B} (x_\mathtt{CL}), \, \Sigma_k )$ satisfies condition~(a) in Definition~\ref{def:bp} for a system $\Sigma_p$ described in \eqref{eq:model-0} with $d \in \mathcal{D}$.
\end{proposition}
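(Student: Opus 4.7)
The plan is to prove that LMI~\eqref{eq:LMI-lyapunov-0} forces the time derivative of $V(x_\mathtt{CL}) \triangleq x_\mathtt{CL}^\top P x_\mathtt{CL} = \mathbf{B}(x_\mathtt{CL})^2$ to be strictly negative on the annular region $\upvarepsilon^2 \leq V \leq 1$ under every admissible uncertainty and disturbance. Since $\mathbf{B} \geq \upvarepsilon > 0$ there, the sign of $\dot{\mathbf{B}} = \dot V /(2\mathbf{B})$ matches that of $\dot V$, so $\dot V < 0$ is equivalent to the required $\dot{\mathbf{B}} < 0$. The two tools are the S-procedure, which folds the admissibility constraints into a single homogeneous quadratic inequality in $(x_\mathtt{CL},d,p)$, and one Schur complement, which recasts that inequality as the three-by-three block LMI~\eqref{eq:LMI-lyapunov-0}.

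First I would differentiate $V$ along the closed-loop dynamics~\eqref{eq:model-CL} to get
\begin{equation*}
\dot V = x_\mathtt{CL}^\top (A_\mathtt{CL}^\top P + P A_\mathtt{CL}) x_\mathtt{CL} + 2 x_\mathtt{CL}^\top P B_\mathtt{CL}\begin{bmatrix} d \\ p \end{bmatrix},
\end{equation*}
and encode the admissible signals as scalar quadratic constraints: $d_{(i)}^2 - \bar{d}_{(i)}^2 \leq 0$ for the disturbance, and, because each diagonal entry of $\diag\{\Delta_h,\Delta_a\}$ has modulus at most one, $p_{(i)}^2 - q_{(i)}^2 \leq 0$ for each uncertainty channel. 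The annular hypothesis $\mathbf{B}(x_\mathtt{CL}) \geq \upvarepsilon$ contributes the extra constraint $\upvarepsilon^2 - V \leq 0$. Applying the S-procedure with the multipliers named in the proposition---$\upmu_{d(i)}$ for the disturbance bounds, $\upmu_{p(i)}$ for the uncertainty bounds, and the specific choice $\upmu_\mathtt{CL}/\upvarepsilon^2$ for the radial bound---produces a sufficient condition of the form
\begin{equation*}
\dot V + \frac{\upmu_\mathtt{CL}}{\upvarepsilon^2} V - d^\top M_d d - p^\top M_p p + q^\top M_p q < 0
\end{equation*}
holding for all $(x_\mathtt{CL},d,p)$. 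The additive constant $+\upmu_\mathtt{CL}$ generated by the disturbance multipliers cancels $(\upmu_\mathtt{CL}/\upvarepsilon^2)\cdot\upvarepsilon^2$ from the radial multiplier precisely because $\upmu_\mathtt{CL} = \sum_i \upmu_{d(i)}\bar{d}_{(i)}^2$; this cancellation is what forces the ratio $\upmu_\mathtt{CL}/\upvarepsilon^2$ to appear in the $(1,1)$ block of~\eqref{eq:LMI-lyapunov-0}.

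Substituting $q = C_\mathtt{CL} x_\mathtt{CL} + D_\mathtt{CL}\begin{bmatrix} d \\ p \end{bmatrix}$ from~\eqref{eq:model-CL}, the homogeneous inequality above takes the matrix form $\Xi + N^\top M_p N \prec \mathbf{0}$, where $\Xi$ is the leading $2\times 2$ block of~\eqref{eq:LMI-lyapunov-0} (after removing its third row and column) and $N = \begin{bmatrix} C_\mathtt{CL} & D_\mathtt{CL} \end{bmatrix}$. Since $M_p \succ \mathbf{0}$, a single Schur complement converts $\Xi + N^\top M_p N \prec \mathbf{0}$ into exactly~\eqref{eq:LMI-lyapunov-0}, with $-M_p$ in the lower-right block and with $M_p C_\mathtt{CL}$, $M_p D_\mathtt{CL}$ populating the third row. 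Hence~\eqref{eq:LMI-lyapunov-0} implies $\dot V < 0$, and therefore $\dot{\mathbf{B}} < 0$, on $\{\upvarepsilon \leq \mathbf{B} \leq 1\}$ under the closed-loop law $u = \Sigma_k(y)$ already baked into $A_\mathtt{CL}$, establishing condition~(a) of Definition~\ref{def:bp}. The one delicate point is the simultaneous bookkeeping of the three families of S-procedure multipliers, particularly verifying that the specific radial multiplier $\upmu_\mathtt{CL}/\upvarepsilon^2$ is what homogenizes the inequality so that the LMI framework applies; the remaining steps are a mechanical Schur complement and the sign transfer $\dot V < 0 \iff \dot{\mathbf{B}} < 0$.
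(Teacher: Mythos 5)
Your proposal is correct and follows essentially the same route as the paper's proof: the same candidate Lyapunov function $\mathbf{B}^2 = x_\mathtt{CL}^\top P x_\mathtt{CL}$, the same S-procedure with multipliers $\upmu_{d(i)}$, $\upmu_{p(i)}$, and $\upmu_\mathtt{CL}/\upvarepsilon^2$ together with the cancellation $\upmu_\mathtt{CL} = \sum_i \upmu_{d(i)}\bar{d}_{(i)}^2$, and the same final Schur complement on the $M_p$ block (your explicit remark that $\dot{\mathbf{B}} = \dot V/(2\mathbf{B})$ transfers the sign on the annulus is a small clarification the paper leaves implicit).
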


\begin{proof}
First, let us consider $\mathbf{B}^2(x_\mathtt{CL}) = x_\mathtt{CL}^\top P x_\mathtt{CL}$ as a candidate quadratic Lyapunov function for our closed-loop system with $u = \Sigma_k(y)$. 
The condition for $(\mathbf{B}(x_\mathtt{CL}), \Sigma_k)$ to satisfy condition (a) in Definition~\ref{def:bp} for $\Sigma_p$ in \eqref{eq:model-0} under the assumption $d \in \mathcal{D}$ can be expressed as $\frac{\df \mathbf{B}^2(x_\mathtt{CL})}{\df t} < 0$, or equivalently
\begin{equation} \, \label{eq:B-dot}
\begin{bmatrix}
x_\mathtt{CL} \\
d \\
p
\end{bmatrix}
^ \top 
\begingroup
\renewcommand*{\arraystretch}{1.6}
\begin{bmatrix*}
A_\mathtt{CL} ^ \top P + P A_\mathtt{CL} & \star \\
B_\mathtt{CL}^{\top} P & \0
\end{bmatrix*}
\endgroup
\begin{bmatrix}
x_\mathtt{CL} \\
d \\
p
\end{bmatrix}
< 0,
\end{equation}
for all $x_\mathtt{CL}$, $d$, and $p$ that
\begin{equation} \label{eq:s-condition}
\begin{aligned}
x_\mathtt{CL} ^ \top P x_\mathtt{CL} \geq \upvarepsilon ^ 2,
\quad
& 
d_{(i)} ^ {2}
\leq \bar{d}_{(i)} ^ 2,
&&
\text{for} \ i = 1, \, 2, \, \cdots, \, n_d,
\\
&
p_{(i)} ^ 2
\leq q_{(i)} ^ 2,
\quad
&&
\text{for} \ i = 1, \, 2, \, \cdots, \, n_{\uptheta},
\end{aligned} 
\end{equation}
where $q_{(i)}$ for $i = 1, \, 2, \, \cdots, \, n_{\uptheta}$ is in terms of $x_\mathtt{CL}$, $d$, and $p$ as defined in \eqref{eq:model-CL}.
Using the S-procedure, \eqref{eq:B-dot} holds under the conditions in \eqref{eq:s-condition} if there exist $\upmu_{d(1)}, \ \upmu_{d(2)}, \ \dots, \ \upmu_{d(n_d)} \geq 0$ and $\upmu_{p(1)}, \ \upmu_{p(2)}, \ \dots, \ \upmu_{p(n_{\uptheta})} \geq 0$ such that for all $x_\mathtt{CL}$, $d$, and $p$
\begin{equation} \label{eq:LMI-lyapunov-1}
\begin{aligned}
\begingroup
\begin{bmatrix}
x_\mathtt{CL} \\
d \\
p
\end{bmatrix}
^ {\hspace{-3pt} \top}
\endgroup
\hspace{-3pt}
\Omega_{P}
\begin{bmatrix}
x_\mathtt{CL} \\
d \\
p
\end{bmatrix}
+ \sum_{i} ^ {n_d} \upmu_{d(i)} \bar{d}_{(i)} ^ 2
- \upmu_\mathtt{CL}
< 0
\end{aligned}
\end{equation}
where
\begin{equation} \label{eq:HP}
\begin{aligned}
\Omega_{P}
&
\triangleq
\begin{bmatrix*}
A_\mathtt{CL} ^ \top P + P A_\mathtt{CL} + \frac{\upmu_\mathtt{CL}}{\upvarepsilon ^ 2} P & \star \\
B_\mathtt{CL}^{\top} P & -\diag \{ M_d, M_p \}
\end{bmatrix*}
\\
&
+
\begin{bmatrix*}
C_\mathtt{CL} & D_\mathtt{CL}
\end{bmatrix*}
^ \top
M_p
\begin{bmatrix*}
C_\mathtt{CL} & D_\mathtt{CL}
\end{bmatrix*}
.
\end{aligned}
\end{equation} 
To ensure \eqref{eq:LMI-lyapunov-1}, we need $\Omega_{P} \prec \mathbf{0}$ and $\upmu_\mathtt{CL} = \sum_{i} ^ {n_d} \upmu_{d(i)} \bar{d}_{(i)} ^ 2$. 
If $\upmu_{p(1)}, \ \upmu_{p(2)}, \ \dots, \ \upmu_{p(n_{\uptheta})} > 0$, then $M_p \succ \mathbf{0}$ and we can use the Schur complement to show that $\Omega_{P} \prec \mathbf{0}$ holds if and only if \eqref{eq:LMI-lyapunov-0} holds. 
Therefore, $(\mathbf{B}(x_\mathtt{CL}), \Sigma_k)$ satisfies condition (a) in Definition~\ref{def:bp} for $d \in \mathcal{D}$ if there exist $\upmu_{d(1)}, \ \upmu_{d(2)}, \ \dots, \ \upmu_{d(n_d)} \geq 0$ and $\upmu_{p(1)}, \ \upmu_{p(2)}, \ \dots, \ \upmu_{p(n_{\uptheta})} > 0$ such that \eqref{eq:LMI-lyapunov-0} holds.
\end{proof}

Proposition~\ref{prop:proposition-1} enforces $x_\mathtt{CL}$ to converge to a residual set $\{ x_\mathtt{CL} : \mathbf{B} (x_\mathtt{CL}) \leq \upvarepsilon \}$ under a bounded disturbance $d \in \mathcal{D}$.
However, if $\Sigma_k$ is undetermined, \eqref{eq:LMI-lyapunov-0} will not be an LMI.
Especially in the upper left block of \eqref{eq:LMI-lyapunov-0}, the controller parameters of $\Sigma_k$ and variable $P$ appear in a nonlinear form.
We will resolve this issue in Proposition~\ref{prop:proposition-2} below.

Let us define $Q \triangleq P ^ {- 1}$ and partition $P$ and $Q$ as
\begin{equation} \label{eq:PQ}
P = 
\begin{bmatrix*}[l] 
X & V \\ 
V ^ \top & \star 
\end{bmatrix*}
\quad \text{and} \quad
Q =
\begin{bmatrix*}[l] 
Y & W \\ 
W ^ \top & \star 
\end{bmatrix*}
,
\end{equation}
where $\star$ in the lower-right blocks of matrices $P$ and $Q$ represents unspecified elements with negligible influence in our controller synthesis.
In addition, we define
\begin{equation}
\Pi_1 \triangleq 
\begin{bmatrix*}[l]
 \I & X \\
\0 & V ^ \top
\end{bmatrix*}
\quad \text{and}
\quad
\Pi_2 \triangleq 
\begin{bmatrix*}[l]
Y        & \I \\
W ^ \top & \0
\end{bmatrix*}
.
\end{equation}
Notice that $\Pi_1 = P \Pi_2$ and $\Pi_2 = Q \Pi_1$ are satisfied by construction. 
By performing a congruence transformation with $\diag ( \Pi_2, \ \I, \ \I )$ on \eqref{eq:LMI-lyapunov-0} \cite{scherer1997multiobjective}, we obtain 
\begin{equation} 
\begin{bmatrix*}
\Omega_{A}        & \star     & \star  \\
\Omega_{B} ^ {\top} &   - \diag \{ M_d, M_p \}     & \star  \\
M_p \Omega_{C}    &     M_p D_\mathtt{CL} &     - M_{p} 
\end{bmatrix*}
\prec 0, 
\label{eq:LMI-lyapunov}
\end{equation}
where
\begin{flalign}
\Omega_{A} & \triangleq
\begingroup
\setlength\arraycolsep{2.0pt}
\mathtt{He} 
\bigg\{
\begin{bmatrix*}
A_{p} Y + A_{u} G & A_{p} \\
  E & X A_{p} + F A_{c} \bar{C}
\end{bmatrix*}
\bigg\}
\endgroup
+ 
\frac{\upmu_\mathtt{CL}}{\upvarepsilon ^ 2}
\begingroup
\setlength\arraycolsep{2.0pt}
\begin{bmatrix*}
 Y & \I \\ 
\I &  X
\end{bmatrix*}
\endgroup
,
\notag 
\\
\Omega_{B} & \triangleq
\text{\normalsize $
\begin{bmatrix*}[r]
\I & \0 &   B_{h} & A_{y} B_{c} \\
X  &  F & X B_{h} & F B_{c} + X A_{y} B_{c}
\end{bmatrix*}
$}, \label{eq:LMI-lyapunov-variable}
\\
\Omega_{C} & \triangleq
\text{\normalsize $
\begin{bmatrix*}[r]
C_{u} G + C_{y} A_{c} \bar{C} Y & C_{y} A_{c} \bar{C} \\
C_{c} \bar{C} Y & C_{c} \bar{C}
\end{bmatrix*}
$}
, \notag \\
\text{and} && \notag \\
E & \triangleq V A_k W ^ \top + F A_c \bar{C} Y + X A_u G + X A_p Y, \label{eq:variable-1} \\
F & \triangleq V B_k, \quad G \triangleq C_k W ^ \top \notag.
\end{flalign}
\begin{proposition} \label{prop:proposition-2}
Supposing that $d \in \mathcal{D}$, there exist a barrier function $\mathbf{B} (x_\mathtt{CL}) = \norm{x_\mathtt{CL}}_{P}$ and a controller $\Sigma_k$ in the form of \eqref{eq:controller-siso} such that $( \mathbf{B} (x_\mathtt{CL}), \, \Sigma_k)$ satisfies condition~(a) in Definition~\ref{def:bp} for a system $\Sigma_p$ in the form of \eqref{eq:model-0}, if there exist $\upmu_{d(1)}, \ \upmu_{d(2)}, \ \dots, \ \upmu_{d(n_d)} \geq 0$ and $\upmu_{p(1)}, \ \upmu_{p(2)}, \ \dots, \ \upmu_{p(n_{\uptheta})} > 0$ such that \eqref{eq:LMI-lyapunov} holds and
\begin{equation} \label{eq:z1-z2}
\begin{bmatrix} 
Y  & \I \\ 
\I & X 
\end{bmatrix} 
\succ 0. 
\end{equation}
\end{proposition}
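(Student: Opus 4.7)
The plan is to show that \eqref{eq:LMI-lyapunov} together with \eqref{eq:z1-z2} is precisely the linearization of \eqref{eq:LMI-lyapunov-0} under the change of variables \eqref{eq:variable-1}, so that any feasible point of the former lets us reconstruct a valid $P \succ \0$ and a controller $(A_k, B_k, C_k)$ for which Proposition~\ref{prop:proposition-1} applies. This follows the Scherer--Gahinet--Chilali transformation of \cite{scherer1997multiobjective}, adapted to the block structure of $A_\mathtt{CL}$, $B_\mathtt{CL}$, $C_\mathtt{CL}$, $D_\mathtt{CL}$ arising from the output-injection canonical form and the DNLDI uncertainty channels.

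First I would partition $P$ and $Q = P^{-1}$ as in \eqref{eq:PQ} and record the two identities $P \Pi_2 = \Pi_1$ and $\Pi_2^\top P \Pi_2 = \Pi_2^\top \Pi_1 = \begin{bmatrix} Y & \I \\ \I & X \end{bmatrix}$, the second following from $X Y + V W^\top = \I$, which is forced by $P Q = \I$. Applying the congruence transformation $\diag(\Pi_2, \I, \I)$ to \eqref{eq:LMI-lyapunov-0} turns the $(1,1)$ block into $\mathtt{He}\{\Pi_1^\top A_\mathtt{CL} \Pi_2\} + \tfrac{\upmu_\mathtt{CL}}{\upvarepsilon^2} \Pi_2^\top \Pi_1$. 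When $\Pi_1^\top A_\mathtt{CL} \Pi_2$ is expanded blockwise, the controller matrices $A_k$, $B_k$, $C_k$ enter only through the products $V A_k W^\top$, $V B_k$, and $C_k W^\top$, which are precisely $E - F A_c \bar{C} Y - X A_u G - X A_p Y$, $F$, and $G$ as defined in \eqref{eq:variable-1}; this produces $\Omega_A$. The remaining blocks $B_\mathtt{CL}^\top \Pi_1$ and $C_\mathtt{CL} \Pi_2$ reduce to $\Omega_B^\top$ and $\Omega_C$ by the same substitutions, while the $-\diag\{M_d, M_p\}$, $M_p D_\mathtt{CL}$, and $-M_p$ blocks are left untouched, yielding \eqref{eq:LMI-lyapunov}.

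For the reverse reconstruction, \eqref{eq:z1-z2} together with the Schur complement gives $Y \succ \0$ and $X \succ Y^{-1}$, so the eigenvalues of $X Y$ are strictly greater than one and $\I - X Y$ is nonsingular. I would pick any square nonsingular $V, W$ with $V W^\top = \I - X Y$; then $P \Pi_2 = \Pi_1$ with $\Pi_2$ invertible uniquely determines $P = \Pi_1 \Pi_2^{-1}$, which is symmetric because $\Pi_2^\top \Pi_1 = \Pi_1^\top \Pi_2$ and positive definite because the congruence $\Pi_2^\top P \Pi_2$ equals the matrix in \eqref{eq:z1-z2}. The controller parameters are then recovered as $B_k = V^{-1} F$, $C_k = G W^{-\top}$, and $A_k = V^{-1} (E - F A_c \bar{C} Y - X A_u G - X A_p Y) W^{-\top}$. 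Reversing the (invertible) congruence returns \eqref{eq:LMI-lyapunov-0} for this $P$ and $\Sigma_k$, and Proposition~\ref{prop:proposition-1} finishes the argument. The main obstacle is the bookkeeping in the congruence expansion: verifying that the substitutions \eqref{eq:variable-1} exhaust every bilinear product between $P$-blocks and controller matrices in $\Pi_1^\top A_\mathtt{CL} \Pi_2$ and $B_\mathtt{CL}^\top \Pi_1$, while the uncertainty-channel matrices $B_h$, $B_c$, $A_y$, $A_c$, $C_y$, $C_u$, $C_c$ pass through the congruence unchanged so that the bilinear dependence on $\Delta_h$, $\Delta_a$ is preserved.
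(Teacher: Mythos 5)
Your proposal is correct and follows essentially the same route as the paper: the Scherer-type congruence with $\diag(\Pi_2,\I,\I)$ linking \eqref{eq:LMI-lyapunov-0} and \eqref{eq:LMI-lyapunov}, non-singularity of $V$ (hence of the congruence) obtained from $VW^\top = \I - XY$ and the Schur complement of \eqref{eq:z1-z2}, and then an appeal to Proposition~\ref{prop:proposition-1}. You additionally spell out the reconstruction of $P = \Pi_1\Pi_2^{-1}$ (with its symmetry and positive definiteness via $\Pi_2^\top P\,\Pi_2 = \begin{bmatrix} Y & \I \\ \I & X\end{bmatrix}\succ\0$) and of $(A_k,B_k,C_k)$ from \eqref{eq:variable-1}, details the paper defers to Sec.~III-C and \cite{dullerud2013course}.
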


\begin{proof}
According to Proposition~\ref{prop:proposition-1}, there exists $( \mathbf{B} (x_\mathtt{CL}), \, \Sigma_k)$ that satisfies condition~(a) in Definition~\ref{def:bp} for $\Sigma_p$ in \eqref{eq:model-0} under the assumption $d \in \mathcal{D}$ if there exist $\upmu_{d(1)}, \ \upmu_{d(2)}, \ \dots, \ \upmu_{d(n_d)} \geq 0$ and $\upmu_{p(1)}, \ \upmu_{p(2)}, \ \dots, \ \upmu_{p(n_{\uptheta})} > 0$ such that \eqref{eq:LMI-lyapunov-0} holds. 
Since \eqref{eq:LMI-lyapunov-0} is obtained by performing a congruence transformation with $\diag ( \Pi_1, \ \I, \ \I ) $ on \eqref{eq:LMI-lyapunov}, \eqref{eq:LMI-lyapunov-0} holds if \eqref{eq:LMI-lyapunov} holds and $\diag ( \Pi_1, \ \I, \ \I ) $ is non-singular. 

Now, let us focus on the existence of non-singular $\diag ( \Pi_1, \ \I, \ \I ) $. Obviously, $\diag ( \Pi_1, \ \I, \ \I ) $ is non-singular if and only if $V$ is non-singular. Since $P Q = \I$, we have $V W ^ \top = \I - X Y$. 
Therefore, there exists non-singular $V$ and $W$ if $\I - X Y$ is non-singular. $\I - X Y$ is non-singular if $Y \succ \0$ and $X - Y ^ {- 1} \succ \0$. 
Using the Schur complement, $Y \succ \0$ and $X - Y ^ {- 1} \succ \0$ if \eqref{eq:z1-z2} holds. 
Hence, we conclude that there exists non-singular $\diag ( \Pi_1, \ \I, \ \I ) $ if \eqref{eq:z1-z2} holds.

To summarize, the existence of $(\mathbf{B}(x_\mathtt{CL}), \Sigma_k)$ satisfying condition~(a) in Definition~\ref{def:bp} for $\Sigma_p$ in \eqref{eq:model-0} is guaranteed under the assumption $d \in \mathcal{D}$, provided that $\upmu_{d(1)}, \ \upmu_{d(2)}, \ \dots, \ \upmu_{d(n_d)} \geq 0$ and $\upmu_{p(1)}, \ \upmu_{p(2)}, \ \dots, \ \upmu_{p(n_{\uptheta})} > 0$ can be found such that \eqref{eq:LMI-lyapunov-0} holds. 
The validity of \eqref{eq:LMI-lyapunov-0} can be ensured by satisfying the conditions given by \eqref{eq:LMI-lyapunov} and \eqref{eq:z1-z2}.
\end{proof}

Notice that \eqref{eq:LMI-lyapunov} is not an LMI in $X$, $Y$, $E$, $F$, $G$, $M_d$, and $M_p$.
However, if we define the values of $\upmu_\mathtt{CL}, \, \upmu_{p(1)}, \ \upmu_{p(2)}, \ \dots, \ \upmu_{p(n_{\uptheta})}$ a priori, \eqref{eq:LMI-lyapunov} becomes an LMI in $( X, \ Y, \ E, \ F, \ G )$ and $M_d$.

\subsection{LMIs for State and Input Limits}

Now, let us consider condition (b) in Definition~\ref{def:bp} for barrier function $\mathbf{B} (x_\mathtt{CL}) = \norm{x_\mathtt{CL}}_{P}$ and safety controller $\Sigma_k$ in the form of \eqref{eq:controller-siso}.
Since $f_{i} ^ \top x_p = f_{i} ^ \top S_p x_\mathsf{CL}$ and $u_{(i)} = C_k ^ {(i)} S_k x_\mathsf{CL}$ in the expressions of $\mathcal{X}_s$ and $\mathcal{U}$ in \eqref{eq:bars}, condition~(b) in Definition~\ref{def:bp} holds for $( \mathbf{B} (x_\mathtt{CL}), \, \Sigma_k)$ if
\begin{alignat}{5}
& 
f_{i} ^ \top 
&& 
\overbracket{
S_p Q S_p ^ \top
}^{Y} 
&& 
f_{i} 
&& 
\leq 
1, \quad 
&&
\text{for} \ i = 1, \, 2, \, \dots, \, n_s, \label{eq:LMI-xb} \\
& 
C_k ^ {(i)} 
&& 
S_k Q S_k ^ \top
&& 
C_k ^ {(i) \top} 
&& 
\leq 
\bar{u}_{(i)} ^ 2, \quad 
&&
\text{for} \ i = 1, \, 2, \, \dots, \, n_u, \label{eq:LMI-ub-0}
\end{alignat}
where $C_k ^ {(i)}$ is the $i$-th row of $C_k$.
If $\Sigma_k$ is undetermined, \eqref{eq:LMI-ub-0} will not be LMIs. 
We address this issue as follows.
Since $Q = P ^ {- 1}$, we can use the Schur complement to obtain that \eqref{eq:LMI-ub-0} holds if 
\begin{align}
\begin{bmatrix}P & \star \\ C_k ^ {(i)} S_k & \bar{u}_{(i)} ^ 2 \end{bmatrix}\succeq \0, \quad \text{for} \ i = 1, \, 2, \, \dots, \, n_u. \label{eq:LMI-ub-1}
\end{align}
By performing a congruence transformation with $\diag ( \Pi_2, \ \I) $ on \eqref{eq:LMI-ub-1}, we obtain that \eqref{eq:LMI-ub-0} holds if
\begin{align}
&
\begin{bmatrix}
Y  & \star & \star \\
\I &     X & \star \\
G ^ {(i)} & \0 & \bar{u}_{(i)} ^ 2
\end{bmatrix}
\succeq \0, \quad \text{for} \ i = 1, \, 2, \, \dots, \, n_u, \label{eq:LMI-ub}
\end{align}
where $G ^ {(i)}$ is the $i$-th row of $G$. The LMIs in \eqref{eq:LMI-ub} are in terms of our new variable set $( X, \ Y, \ E, \ F, \ G )$ introduced in Proposition~\ref{prop:proposition-2}. 
In addition, we can use the Schur complement to obtain that \eqref{eq:LMI-ub} also implies condition \eqref{eq:z1-z2} in Proposition~\ref{prop:proposition-2}.

\subsection{Barrier Pair Construction} \label{sec:barrier-pair-construction}

Through a convex optimization
\begin{equation} \label{eq:optimization-volume}
\begin{aligned}
& \underset{X, Y, E, F, G, M_d}{\mathtt{maximize \quad}}
& & \mathtt{log}(\mathtt{det} (Y)) \\
& \mathtt{subject \ to} & & \eqref{eq:LMI-lyapunov}, \ \eqref{eq:LMI-xb}, \ \eqref{eq:LMI-ub}, \\
&&& 
X \succ 0, \ Y \succ 0, \\
&&& 
\upmu_{d(1)}, \ \upmu_{d(2)}, \ \dots, \ \upmu_{d(n_d)} \geq 0, \\
&&&
\upmu_\mathtt{CL} = \sum_{i} ^ {n_d} \upmu_{d(i)} \bar{d}_{(i)} ^ 2,
\end{aligned}
\end{equation}
we obtain a solution of $( X, \ Y, \ E, \ F, \ G )$ that maximize the volume of the $x_p$ space projection $\{ S_p x_\mathtt{CL}: \ \mathbf{B} (x_\mathtt{CL}) \leq 1 \}$ of the unit sub-level set of $\mathbf{B} (x_\mathtt{CL})$. 

There are multiple ways to construct a controller $\Sigma_k$ based on a solution of $( X, \ Y, \ E, \ F, \ G )$.
For example, \cite[Lemma~7.9]{dullerud2013course} defines $V V ^ \top = X - Y ^ {- 1}$ and $W = - Y V$.
After obtaining $V$ and $W$, we can construct our controller parameters $A_k$, $B_k$, and $C_k$ according to \eqref{eq:variable-1}.
 
The barrier pair synthesis in optimization \eqref{eq:optimization-volume} is greatly influenced by the scalar parameter $\upvarepsilon$ and scalar variables $\upmu_\mathtt{CL}, \, \upmu_{p(1)}, \ \upmu_{p(2)}, \ \dots, \ \upmu_{p(n_{\uptheta})}$. 
Specifically, a smaller value of $\upvarepsilon$ has a notable effect on disturbance rejection, resulting in a more conservative controller $\Sigma_k$. 
Consequently, the value of $\mathtt{log}(\mathtt{det} (Y))$ and the size of $\{ S_p x_\mathtt{CL}: \mathbf{B} (x_\mathtt{CL}) \leq 1 \}$ become relatively smaller. The selection of scalar variables $\upmu_\mathtt{CL}, \, \upmu_{p(1)}, \ \upmu_{p(2)}, \ \dots, \ \upmu_{p(n_{\uptheta})}$ also plays an important role in determining the solution to optimization \eqref{eq:optimization-volume}. 
However, comprehending their specific effects on optimization results may be less intuitive compared to the influence of $\upvarepsilon$.

Moreover, the scalar variables $\upmu_{p(1)}, \ \upmu_{p(2)}, \ \dots, \ \upmu_{p(n_{\uptheta})}$ address the model uncertainties in the DNLDI model. 
To fine-tune these variables, the D-K iteration method is commonly employed \cite{balas1993mu}. 
This iterative approach alternatively adjusts the scalar variables and controllers in the LMI synthesis process to find an optimal solution. 
Recently, an algorithm in \cite{chen2020alternative} extends the idea of the D-K iteration method to the LMI synthesis problems using the parameter transformation scheme in \cite{scherer1997multiobjective}, providing us with a valuable technique in this context.

\subsection{Barrier Function Composition}

Both the safety controller $\Sigma_k$ and a barrier function $\mathbf{B}_0 (x_\mathtt{CL}) = \norm{x_\mathtt{CL}}_{P_0}$ can be constructed from the solution $( X_0, \ Y_0, \ E_0, \ F_0, \ G_0 )$ to the optimization problem in \eqref{eq:optimization-volume}. 
Based on this safety controller $\Sigma_k$, we can further optimize the shape of the invariant set of our barrier function using the composite Lyapunov function method in \cite{hu2003composite}.

First, let us go back to Proposition~1. 
By performing a congruence transformation with $\diag ( Q, \ \I, \ N_p, \ N_p )$ on \eqref{eq:LMI-lyapunov-0}, we obtain
\eqref{eq:LMI-lyapunov-composite} holds
\begin{equation} \label{eq:LMI-lyapunov-composite}
\begingroup
\renewcommand*{\arraystretch}{1.2}
\begin{bmatrix*}
A_\mathtt{CL} Q + Q A_\mathtt{CL} ^ \top + \frac{\upmu_\mathtt{CL}}{\upvarepsilon ^ 2} Q & \star & \star \\
\diag (\I, N_p) B_\mathtt{CL}^{\top} & - \diag (M_d, N_p)  & \star \\
C_\mathtt{CL} Q & D_\mathtt{CL} & - N_{p} \\
\end{bmatrix*}
\endgroup
\prec \0.
\end{equation}
where $N_p \triangleq \diag \{ \upnu_{p(1)}, \ \upnu_{p(2)}, \ \dots, \ \upnu_{p(n_{\uptheta})} \}$ and $\upnu_{p(i)} \triangleq \frac{1}{\upmu_{p(i)}}$ for $i = 1, \, 2, \, \cdots, \, n_{\uptheta}$.
Since $\Sigma_k$ is known, \eqref{eq:LMI-lyapunov-composite} becomes an LMI in $Q$, $M_d$, and $N_p$ if we fix the value of $\upmu_\mathtt{CL}$.

Then, let us define $Q_i \in \R ^ {2 n \times 2 n}$, $\upxi_i \in \R ^ {n_{i}}$, and $S_{i} \in \R ^ {n_{i} \times 2 n}$ for $i = 1, \, 2, \, \cdots, \, n_{\upgamma}$ and $0 < n_{i} \leq 2 n$. 
Substituting $Q = Q_i$ into \eqref{eq:LMI-xb}, \eqref{eq:LMI-ub-0}, and \eqref{eq:LMI-lyapunov-composite}, we construct $n_{\upgamma}$ convex optimizations
\begin{equation} \label{eq:optimization-direction-i}
\begin{aligned}
& \underset{Q_i, M_d, N_p}{\mathtt{minimize}}
& & \uprho_i \\
& \mathtt{subject \ to} & & 
\eqref{eq:LMI-xb}, \ \eqref{eq:LMI-ub-0}, \ \eqref{eq:LMI-lyapunov-composite}, \\
&&&
Q_i \succ 0, \quad
\begin{bmatrix}
\uprho_i & \star \\
\upxi_i  & S_{i} Q_i S_{i} ^ \top
\end{bmatrix}
\succeq \0, \\
&&&  
\upmu_{d(1)}, \ \upmu_{d(2)}, \ \dots, \ \upmu_{d(n_d)} \geq 0, \\
&&&  
\upnu_{p(1)}, \ \upnu_{p(2)}, \ \dots, \ \upnu_{p(n_{\uptheta})} > 0, \\ 
&&& 
\upmu_\mathtt{CL} = \sum_{i} ^ {n_d} \upmu_{d(i)} \bar{d}_{(i)} ^ 2,
\end{aligned}
\end{equation}
to synthesize $n_{\upgamma}$ additional barrier functions $\mathbf{B}_i (x_\mathtt{CL}) \triangleq \norm{x_\mathtt{CL}}_{P_i}$ for $P_i \triangleq Q_i ^ {- 1}$.
Notice that $\begin{bmatrix} \uprho_i & \star \\ \upxi_i  & S_{i} Q_i S_{i} ^ \top \end{bmatrix} \succeq \0$ is equivalent to $\upxi_i ^ \top (S_{i} Q_i S_{i} ^ \top) ^ {- 1} \upxi_i \leq \uprho_i$ by the Schur complement.
If we consider $\frac{\upxi_i}{\sqrt{\uprho_i}}$ as a point in the $S_i x_\mathtt{CL}$ space, it lies within the ellipsoidal region defined by $\{ S_i x_\mathtt{CL} : \mathbf{B}_i (x_\mathtt{CL}) \leq 1 \}$.
By minimizing $\uprho_i$ in \eqref{eq:optimization-direction-i}, we maximize the radius of $\{ S_i x_\mathtt{CL} : \mathbf{B}_i (x_\mathtt{CL}) \leq 1 \}$ in the direction of vector $\upxi_i$.

Based on our original barrier function $\mathbf{B}_0 (x_\mathtt{CL})$ and all the new barrier functions synthesized through optimization \eqref{eq:optimization-direction-i}, we can introduce our composite barrier function.
Since $Q_1, \, Q_2, \, \dots, \, Q_{n_{\upgamma}}$ all satisfy LMIs \eqref{eq:LMI-xb}, \eqref{eq:LMI-ub-0}, and \eqref{eq:LMI-lyapunov-composite},
\begin{equation} \label{eq:variable-gamma}
Q_{\upgamma} \triangleq 
\sum_{i = 0} ^ {n_{\upgamma}} \upgamma_i Q_i
\end{equation}
also satisfies LMIs \eqref{eq:LMI-xb}, \eqref{eq:LMI-ub-0}, and \eqref{eq:LMI-lyapunov-composite},
where
\begin{flalign}
\upgamma & \triangleq 
\begin{bmatrix}
\upgamma_0 & \upgamma_1 & \cdots & \upgamma_{n_{\upgamma}}
\end{bmatrix}
^ \top 
\in \Gamma
&&
\label{eq:gamma}
\\
\text{and}
&&
\notag
\\
\Gamma & \triangleq \left\{ \upgamma : 
\begin{matrix}
\sum_{i = 0} ^ {n_{\upgamma}} \upgamma_i = 1
\end{matrix}
, \ \upgamma_i \geq 0, \ i = 0, \, 1, \, \dots, \, n_{\upgamma} \right\}.
&&
\label{eq:Gamma}
\end{flalign}
\begin{lemma} \label{lemma:Bc}
Let us define
\begin{equation} \label{eq:Bc}
\mathbf{B}_\mathtt{C} (x_\mathtt{CL}) \triangleq \underset{\upgamma \in \Gamma}{\min} \ \mathbf{B}_{\upgamma} (x_\mathtt{CL}).
\end{equation}
where 
$\mathbf{B}_{\upgamma} (x_\mathtt{CL}) \triangleq \norm{x_\mathtt{CL}}_{P_{\upgamma}}$ for $P_{\upgamma} \triangleq Q_{\upgamma} ^ {- 1}$. Then, $\mathbf{B}_\mathtt{C} (x_\mathtt{CL})$ satisfies the following conditions:
\begin{itemize}
\item[(a)] Let $\upgamma_\star$ be the optimal $\upgamma$ such that $\mathbf{B}_{\upgamma_\star} (x_\mathtt{CL}) \triangleq \underset{\upgamma \in \Gamma}{\min} \ \mathbf{B}_{\upgamma} (x_\mathtt{CL})$, then $\dfrac{\uppartial \mathbf{B}_\mathtt{C} ^ 2}{\uppartial x_\mathtt{CL}} = 2 Q_{\upgamma_\star} ^ {- 1} x_\mathtt{CL}$.
\vspace{2pt}
\item[(b)] $\{ x_\mathtt{CL} : \mathbf{B}_\mathtt{C} (x_\mathtt{CL}) \leq 1 \}$ is the convex hull of $\{ x_\mathtt{CL} : \mathbf{B}_i (x_\mathtt{CL}) \leq 1 \}$ for $i = 0, \, 1, \, \cdots, \, n_{\upgamma}$.
\end{itemize}
\end{lemma}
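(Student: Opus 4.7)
\textbf{Part (a)} is an immediate application of Danskin's (envelope) theorem to the parametric minimum $\mathbf{B}_\mathtt{C}^2 (x_\mathtt{CL}) = \underset{\upgamma \in \Gamma}{\min} \, f(\upgamma, x_\mathtt{CL})$ with $f(\upgamma, x_\mathtt{CL}) \triangleq x_\mathtt{CL}^\top Q_\upgamma^{-1} x_\mathtt{CL}$. The simplex $\Gamma$ is compact, $Q_\upgamma \succ \0$ for every $\upgamma \in \Gamma$ (as a convex combination of the $Q_i \succ \0$), and $f$ is jointly continuous with $\nabla_{x_\mathtt{CL}} f(\upgamma, x_\mathtt{CL}) = 2 Q_\upgamma^{-1} x_\mathtt{CL}$ smooth in $x_\mathtt{CL}$; thus at any $x_\mathtt{CL}$ where the minimizer $\upgamma_\star$ is unique, the theorem gives $\uppartial \mathbf{B}_\mathtt{C}^2 / \uppartial x_\mathtt{CL} = 2 Q_{\upgamma_\star}^{-1} x_\mathtt{CL}$.

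For \textbf{Part (b)}, the plan is to write $L_\mathtt{C} \triangleq \{ x_\mathtt{CL} : \mathbf{B}_\mathtt{C}(x_\mathtt{CL}) \leq 1 \} = \bigcup_{\upgamma \in \Gamma} \mathcal{E}_\upgamma$ with $\mathcal{E}_\upgamma \triangleq \{ x_\mathtt{CL} : x_\mathtt{CL}^\top Q_\upgamma^{-1} x_\mathtt{CL} \leq 1 \}$, and then show $L_\mathtt{C} = \mathtt{conv}\bigl(\bigcup_i \mathcal{E}_i\bigr)$ by matching support functions. Each $\mathcal{E}_i$ equals $\mathcal{E}_\upgamma$ at the vertex $\upgamma = e_i$, so $\bigcup_i \mathcal{E}_i \subseteq L_\mathtt{C}$ is immediate. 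For the reverse inclusion, I compute
\begin{equation*}
h_{L_\mathtt{C}}(z) = \underset{\upgamma \in \Gamma}{\max} \, \sqrt{z^\top Q_\upgamma z} = \sqrt{\underset{i}{\max} \, z^\top Q_i z} = \underset{i}{\max} \, h_{\mathcal{E}_i}(z),
\end{equation*}
where the second equality uses the linearity of $\upgamma \mapsto z^\top Q_\upgamma z$ over the simplex. This matches $h_{\mathtt{conv}(\cup_i \mathcal{E}_i)}(z)$; together with compactness of $\Gamma$ and continuity of $\mathbf{B}_\mathtt{C}$ (which make $L_\mathtt{C}$ compact), two compact convex sets with identical support functions must coincide.

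The \textbf{main obstacle} is convexity of $L_\mathtt{C}$, which is not obvious because $L_\mathtt{C}$ is a \emph{union} rather than an intersection of ellipsoids. My plan is to expose the hidden convexity by combining the variational identity $x^\top Q^{-1} x = \underset{z}{\max} \{ 2 z^\top x - z^\top Q z \}$ with Sion's minimax theorem (the integrand is concave in $z$, linear in $\upgamma$, and $\Gamma$ is compact convex), interchanging the min and max to obtain
\begin{equation*}
\mathbf{B}_\mathtt{C}^2 (x_\mathtt{CL}) = \underset{z}{\max} \left\{ 2 z^\top x_\mathtt{CL} - \underset{i}{\max} \, z^\top Q_i z \right\},
\end{equation*}
a supremum of affine functions of $x_\mathtt{CL}$ and therefore convex. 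A secondary subtlety in Part~(a) is the case of non-unique $\upgamma_\star$, in which $\mathbf{B}_\mathtt{C}^2$ is only directionally differentiable and the stated formula must be interpreted as selecting one element of the subdifferential $\{ 2 Q_{\upgamma_\star}^{-1} x_\mathtt{CL} : \upgamma_\star \in \arg\min_\upgamma \mathbf{B}_\upgamma^2(x_\mathtt{CL}) \}$.
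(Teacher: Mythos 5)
The paper does not prove this lemma itself; it cites \cite[Appendix~A]{hu2003composite}, whose argument establishes convexity of the level set directly from the joint convexity of $(x, Q) \mapsto x^\top Q^{-1} x$ on $\R^{2n} \times \mathbb{S}_{++}^{2n}$ (a two-line Schur-complement fact: for minimizers $\upgamma_1, \upgamma_2$ of $x_1, x_2$ one takes $\upgamma = \uplambda \upgamma_1 + (1-\uplambda)\upgamma_2$ and reads off $\mathbf{B}_\mathtt{C}^2(\uplambda x_1 + (1-\uplambda)x_2) \leq \uplambda \mathbf{B}_\mathtt{C}^2(x_1) + (1-\uplambda)\mathbf{B}_\mathtt{C}^2(x_2)$), and then identifies the level set and gradient via KKT/envelope reasoning. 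Your route is genuinely different and, as far as I can check, correct: Danskin for part (a); and for part (b) the representation $L_\mathtt{C} = \bigcup_{\upgamma \in \Gamma}\mathcal{E}_\upgamma$, a support-function match $h_{L_\mathtt{C}} = \max_i h_{\mathcal{E}_i}$ exploiting linearity of $\upgamma \mapsto z^\top Q_\upgamma z$ over the simplex, and Sion's minimax theorem to expose $\mathbf{B}_\mathtt{C}^2$ as a supremum of affine functions, hence convex — which is exactly the missing ingredient needed before "equal support functions implies equal compact convex sets" can be invoked. Your approach buys a self-contained proof and a dual representation $\mathbf{B}_\mathtt{C}^2 = \sup_z\{2z^\top x_\mathtt{CL} - \max_i z^\top Q_i z\}$ that is interesting in its own right, at the cost of heavier machinery than the joint-convexity argument. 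One remark on your final caveat: your own dual formula already resolves it — since $\max_i z^\top Q_i z$ is strictly convex and supercoercive, its conjugate is differentiable everywhere, so $\mathbf{B}_\mathtt{C}^2$ is $C^1$ even where $\upgamma_\star$ is non-unique and all candidates $2Q_{\upgamma_\star}^{-1}x_\mathtt{CL}$ necessarily coincide; the hedge about selecting a subgradient is unnecessary.
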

\begin{proof}
See \cite[Appendix~A]{hu2003composite}.
\end{proof}
In this paper, we refer to $\mathbf{B}_\mathtt{C} (x_\mathtt{CL})$ in \eqref{eq:Bc} as a composite barrier function.
Since $Q_{\upgamma ^ \star}$ satisfies LMI \eqref{eq:LMI-lyapunov-composite} and $\frac{\uppartial \mathbf{B}_\mathtt{C} ^ 2}{\uppartial x_\mathtt{CL}} = 2 Q_{\upgamma_\star} ^ {- 1} x_\mathtt{CL}$ from condition (a) in Lemma~\ref{lemma:Bc}, we have $\frac{\df \mathbf{B}_\mathtt{C} ^2}{\df t} < 0$ for $d \in \mathcal{D}$.
In addition, condition (b) in Lemma~\ref{lemma:Bc} implies that $(\mathbf{B}_\mathtt{C}, \, \Sigma_k)$ satisfies condition (b) in Definition~\ref{def:bp}.
Therefore, $(\mathbf{B}_\mathtt{C}, \, \Sigma_k)$ is also a barrier pair, which satisfies conditions (a) and (b) in Definition~\ref{def:bp}.

In Fig.~\ref{fig:result-mimo}.(a)-(b), we show an example of $\{ x_\mathtt{CL} : \mathbf{B}_\mathtt{C} (x_\mathtt{CL}) \leq 1 \}$ for $n_{\upgamma} = 6$, in which our composite barrier function approach leverages two optimization processes to enhance the safety guarantees of the system. 
First, optimization \eqref{eq:optimization-volume} is designed to maximize the overall volume of the invariant set, resulting in a broader safety region encompassing all directions. 
Conversely, optimization \eqref{eq:optimization-direction-i} specifically focuses on maximizing the invariant set of the barrier function in a particular direction $\upxi_i$, utilizing the same controller $\Sigma_k$ derived from \eqref{eq:optimization-volume}. 
By combining the outcomes of these optimizations through a composition barrier function, we extend the invariant set of the barrier function obtained from \eqref{eq:optimization-volume} in each specific direction, thereby expanding the safety region along those directions.

\section{Barrier Function Estimation} \label{sec:fault-detection-0}

Notice that the operation of the switching system $\Sigma_s$ in Fig.~\ref{fig:problem} requires the real-time evaluation of the barrier function $\mathbf{B}_\mathtt{C}$. 
However, this is challenging since the state $x_p$ of our plant $\Sigma_p$ is not available.
In this section, we will introduce a robust barrier function estimator that provides an upper bound for $\mathbf{B}_\mathtt{C}$ using only the measurements of $y$ and $u$.

\subsection{Identifier-Based Estimator} \label{sec:identifier}

In \cite{morse1980global}, the concept of an identifier-based estimator was developed for the purpose of model identification and adaptive control. 
However, as a byproduct, it also provide us with a robust state estimate $\hat{x}_p$ under the presence of model uncertainty in $\Sigma_p$. 
Let us define 
\begin{equation} \label{eq:Az}
\begin{aligned}
& 
A_{z} \triangleq \diag \big\{ A_{z}^{(1)}, \, A_{z}^{(2)}, \, \cdots, \, A_{z}^{(n_y)} \big\} \\ 
&
B_{z} \triangleq \diag \big\{ B_{z}^{(1)}, \, B_{z}^{(2)}, \, \cdots, \, B_{z}^{(n_y)} \big\} 
\end{aligned}
\end{equation}
where $A_{z}^{(i)} \triangleq \bar{A}^{(i)} - B_{z}^{(i)} \bar{C}^{(i)}$ and $B_{z}^{(i)} \in \R ^ {n_i \times 1}$ are defined by the user such that $A_{z}^{(i)}$ is a Hurwitz matrix for $i = 1, \, 2, \, \cdots, \, n_y$. 
Based on the definition of $A_{z}$ and $B_{z}$ in \eqref{eq:Az}, we have $A_{z} = \bar{A}- B_{z} \bar{C}$. Substituting $\bar{A} = A_{z} + B_{z} \bar{C}$ and \eqref{eq:ths-qip-1} into \eqref{eq:model}, the state-space model of $\Sigma_{p}$ becomes
\begin{alignat}{4}
&
\dot{x}_{p} = 
&
A_{z} x_{p} 
&&
+ 
\overbracket{
\begin{bmatrix} 
A_{u} & A_{y} + B_{z} A_{a} 
\end{bmatrix} 
}^{\hat{A}_{h}}
h
&
+ 
\overbracket{
\setlength\arraycolsep{2.0pt}
\begin{bmatrix} 
\ \ B_{h} & \ \ B_{z} B_{a} 
\end{bmatrix}
}^{\hat{B}_{h}}
\hat{p}_{h}
\notag
\\
&
&
&&
+ 
\underbracket{
\begin{bmatrix} 
\I & - A_{y} - B_{z} A_{a}
\end{bmatrix} 
}_{\hat{A}_{d}}
d 
&
+ 
\underbracket{
\setlength\arraycolsep{2.0pt}
\begin{bmatrix} 
- B_{h} & - B_{z} B_{a} 
\end{bmatrix}
}_{\hat{B}_{d}}
\hat{p}_{d}
\notag
\\
&
\hat{q}_{h} = 
&
\underbracket{
\setlength\arraycolsep{2.0pt}
\begin{bmatrix} 
C_{u} & C_{y} \\ 
   \0 & C_{a} 
\end{bmatrix} 
}_{\hat{C}_{h}}
&&
h
,
\quad
\hat{p}_{h} = 
\begingroup
\setlength\arraycolsep{2.0pt}
\begin{bmatrix} 
\Delta_{h} & \0 \\ 
\0 & \Delta_{a} 
\end{bmatrix}
\endgroup
\hat{q}_{h} 
&
\label{eq:model-siso-new}
\\
&
\hat{q}_{d} = 
&
\underbracket{
\begin{bmatrix} 
\0 & C_{y} \\ 
\0 & C_{a} 
\end{bmatrix} 
}_{\hat{C}_{d}}
&&
d 
,
\quad
\hat{p}_{d} = 
\begingroup
\setlength\arraycolsep{2.0pt}
\begin{bmatrix} 
\Delta_{h} & \0 \\ 
\0 & \Delta_{a} 
\end{bmatrix}
\endgroup
\hat{q}_{d} 
&
\notag
\end{alignat}
where
$\hat{p}_{h}, \, \hat{p}_{d}, \, \hat{q}_{h}, \, \hat{q}_{d} \in \R ^ {n_{\uptheta}}$ are the vectors of disturbance and output signals related to the parameter uncertainty of $\Theta_{h}$ and $\Theta_{a}$.
The identifier-based estimator for system $\Sigma_{p}$ comprises $n_y \cdot n_h$ sensitivity function filters. For each partition $\Sigma_{p}^{(i)}$ of $\Sigma_{p}$, we define $n_h$ sensitivity function filters as follows:
\begin{equation} \label{eq:estimator-h}
\begin{aligned}
\dot{z}_{h(j)}^{(i)} & = 
A_{z}^{(i)\top}
z_{h(j)}^{(i)} + 
\bar{C}^{(i)\top}
h_{(j)}, \quad 
\text{for} \ j = 1, \, 2, \, \dots, \, n_{h},
\end{aligned}
\end{equation}
where $i = 1, \, 2, \, \dots, \, n_y$.
To explain how the sensitivity function filters in \eqref{eq:estimator-h} contribute to robust barrier function estimation, it is essential to incorporate the following filters in the intermediate stages of our discourse:
\begin{align}
\dot{z}_{p(j)}^{(i)} & = 
A_{z}^{(i)\top}
z_{p(j)}^{(i)} + 
\bar{C}^{(i)\top}
\hat{p}_{h(j)}, \quad
\text{for} \ j = 1, \, 2, \, \dots, \, n_{\uptheta}, \label{eq:estimator-p} \\
\dot{z}_{q(j)}^{(i)} & = 
A_{z}^{(i)\top}
z_{q(j)}^{(i)} + 
\bar{C}^{(i)\top}
\hat{q}_{h(j)}, \quad
\text{for} \ j = 1, \, 2, \, \dots, \, n_{\uptheta}, \label{eq:estimator-q}
\end{align}
where $i = 1, \, 2, \, \dots, \, n_y$.
Let us define 
\begin{equation} \label{eq:Es-i}
\begin{aligned}
E_{h(j)}^{(i)} & \triangleq \mathcal{C}_{h(j)}^{(i)}  {\mathcal{C}_{z}^{(i)}}^{-1}, \quad\quad \text{for} \ j = 1, \, 2, \, \dots, \, n_{h}
\\
E_{p(j)}^{(i)} & \triangleq \mathcal{C}_{p(j)}^{(i)}  {\mathcal{C}_{z}^{(i)}}^{-1}, \quad\quad \text{for} \ j = 1, \, 2, \, \dots, \, n_{\uptheta}
\\
E_{q(j)}^{(i)} & \triangleq \mathcal{C}_{q(j)}^{(i)}  {\mathcal{C}_{z}^{(i)}}^{-1}, \quad\quad \text{for} \ j = 1, \, 2, \, \dots, \, n_{\uptheta}
\end{aligned}
\end{equation}
where
$\mathcal{C}_{h(j)}^{(i)}$, $\mathcal{C}_{p(j)}^{(i)}$, $\mathcal{C}_{q(j)}^{(i)}$, and $\mathcal{C}_{z}^{(i)}$ are the controllability matrices of $(A_{z}^{(i)\top}, \ z_{h(j)}^{(i)})$, $(A_{z}^{(i)\top}, \ z_{p(j)}^{(i)})$, $(A_{z}^{(i)\top}, \ z_{q(j)}^{(i)})$, and $(A_{z}^{(i)\top}, \ \bar{C}^{(i)\top})$. 
For convenience of expression, we define
\begin{equation} \label{eq:Es}
\begin{aligned}
E_{h(j)} & \triangleq \diag \big\{ E_{h(j)}^{(1)}, \, E_{h(j)}^{(2)}, \, \cdots, \, E_{h(j)}^{(n_y)} \big\},
\\
E_{p(j)} & \triangleq \diag \big\{ E_{p(j)}^{(1)}, \, E_{p(j)}^{(2)}, \, \cdots, \, E_{p(j)}^{(n_y)} \big\},
\\
E_{q(j)} & \triangleq \diag \big\{ E_{q(j)}^{(1)}, \, E_{q(j)}^{(2)}, \, \cdots, \, E_{q(j)}^{(n_y)} \big\}.
\end{aligned}
\end{equation}
The definitions of $E_{h(j)}^{(i)}$, $E_{p(j)}^{(i)}$ and $E_{q(j)}^{(i)}$ in \eqref{eq:Es-i} will help us to express and prove Lemma~\ref{lemma:xs}, which finds a state estimate $\hat{x}_p$ for the plant state $x_p$ using the sensitivity function filers in \eqref{eq:estimator-h}, \eqref{eq:estimator-p}, and \eqref{eq:estimator-q}.
\begin{lemma} \label{lemma:xs}
If we define 
\begin{equation} \label{eq:xs}
\hat{x}_{p} \triangleq
\sum_{j}^{n_{h}} E_{h(j)}^{\top} \hat{A}_{h(j)} + \sum_{j}^{n_{\uptheta}} E_{p(j)}^{\top} \hat{B}_{h(j)}
\end{equation}
then $e \triangleq x_p - \hat{x}_p$ satisfies 
\begin{equation} \label{eq:model-error}
\dot{e} = 
A_{z}
e 
+ 
\underbracket{
\begin{bmatrix} 
\I & - A_{y} - B_{z} A_{a}
\end{bmatrix} 
}_{\hat{A}_{d}}
d 
+ 
\underbracket{
\begin{bmatrix} 
- B_{h} & - B_{z} B_{a} 
\end{bmatrix}
}_{\hat{B}_{d}}
\hat{p}_{d}.
\end{equation}
\end{lemma}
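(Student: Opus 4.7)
The plan is to verify the error dynamics in \eqref{eq:model-error} by first showing that $\hat{x}_p$ defined in \eqref{eq:xs} satisfies the nominal propagation $\dot{\hat{x}}_p = A_z \hat{x}_p + \hat{A}_h h + \hat{B}_h \hat{p}_h$. Subtracting this identity from the plant equation in \eqref{eq:model-siso-new} then leaves exactly $A_z e + \hat{A}_d d + \hat{B}_d \hat{p}_d$, which is the claim. Because the parameter columns $\hat{A}_{h(j)}$ and $\hat{B}_{h(j)}$ are constant, and because $A_z$, $\bar{C}$, and the sensitivity matrices are block diagonal over the $n_y$ partitions, the whole computation can be carried out block-by-block and reassembled through \eqref{eq:Es}.

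For each partition $i$, I would differentiate the Krylov matrix $\mathcal{C}_{h(j)}^{(i)} = [\, z_{h(j)}^{(i)}, \, A_z^{(i)\top} z_{h(j)}^{(i)}, \, \ldots, \, (A_z^{(i)\top})^{n_i - 1} z_{h(j)}^{(i)} \,]$ column-by-column using \eqref{eq:estimator-h}. A short telescoping yields $\dot{\mathcal{C}}_{h(j)}^{(i)} = A_z^{(i)\top} \mathcal{C}_{h(j)}^{(i)} + \mathcal{C}_z^{(i)} h_{(j)}$, and right-multiplying by $(\mathcal{C}_z^{(i)})^{-1}$ produces $\dot{E}_{h(j)}^{(i)} = A_z^{(i)\top} E_{h(j)}^{(i)} + \I_{n_i} h_{(j)}$. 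The identical derivation applied to \eqref{eq:estimator-p} gives $\dot{E}_{p(j)}^{(i)} = A_z^{(i)\top} E_{p(j)}^{(i)} + \I_{n_i} \hat{p}_{h(j)}$. Transposing and summing against $\hat{A}_{h(j)}^{(i)}$ and $\hat{B}_{h(j)}^{(i)}$, the identity-matrix pieces collect across partitions into precisely $\hat{A}_h h + \hat{B}_h \hat{p}_h$, accounting for one of the two required terms.

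The step I expect to be the main obstacle is moving $A_z^{(i)}$ past the transposed sensitivity matrices so that the remaining pieces collapse into $A_z \hat{x}_p$. I plan to resolve this through the controllability-to-companion intertwining: applying the Cayley-Hamilton theorem to $A_z^{(i)\top}$ produces a companion matrix $A_c^{(i)}$ satisfying $A_z^{(i)\top} \mathcal{C}_z^{(i)} = \mathcal{C}_z^{(i)} A_c^{(i)}$, and because the same characteristic polynomial annihilates $A_z^{(i)\top}$ regardless of the vector on which it acts, the analogous relation $A_z^{(i)\top} \mathcal{C}_{h(j)}^{(i)} = \mathcal{C}_{h(j)}^{(i)} A_c^{(i)}$ holds with the \emph{same} $A_c^{(i)}$. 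Post-multiplying by $(\mathcal{C}_z^{(i)})^{-1}$ yields $A_z^{(i)\top} E_{h(j)}^{(i)} = E_{h(j)}^{(i)} A_z^{(i)\top}$, and transposing gives the desired commutation $A_z^{(i)} E_{h(j)}^{(i)\top} = E_{h(j)}^{(i)\top} A_z^{(i)}$. An identical argument handles $E_{p(j)}^{(i)}$.

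With commutativity in hand, the block-diagonal pieces aggregate into $\dot{\hat{x}}_p = A_z \hat{x}_p + \hat{A}_h h + \hat{B}_h \hat{p}_h$, and subtracting this from \eqref{eq:model-siso-new} produces \eqref{eq:model-error}. Every other step is bookkeeping; the commutation property of $E_{h(j)}^{(i)\top}$ and $E_{p(j)}^{(i)\top}$ with $A_z^{(i)}$ is the sole conceptual ingredient.
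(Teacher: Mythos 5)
Your proposal follows essentially the same route as the paper's proof: reduce the claim to showing $\dot{\hat{x}}_p = A_z \hat{x}_p + \hat{A}_h h + \hat{B}_h \hat{p}_h$, derive the matrix ODEs $\dot{E}_{h(j)}^{(i)} = A_z^{(i)\top} E_{h(j)}^{(i)} + h_{(j)}\cdot \I$ and $\dot{E}_{p(j)}^{(i)} = A_z^{(i)\top} E_{p(j)}^{(i)} + \hat{p}_{h(j)}\cdot \I$, transpose, use the commutation of $E_{h(j)}^{(i)}$ and $E_{p(j)}^{(i)}$ with $A_z^{(i)\top}$, and aggregate over the partitions before subtracting from \eqref{eq:model-siso-new}. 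The only difference is that you explicitly justify the two steps the paper merely asserts (the column-wise differentiation of the Krylov matrices and the companion-matrix intertwining behind the commutation), and both justifications are correct.
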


\begin{proof}
Subtracting \eqref{eq:model-error} from \eqref{eq:model-siso-new}, we obtain that 
\begin{equation} \label{eq:model-siso-hat}
\dot{\hat{x}}_{p} = 
A_{z}
\hat{x}_{p} 
+ 
\underbracket{
\begin{bmatrix} 
A_{u} & A_{y} + B_{z} A_{a}
\end{bmatrix} 
}_{\hat{A}_{h}}
h
+ 
\underbracket{
\begin{bmatrix} 
B_{h} &   B_{z} B_{a} 
\end{bmatrix}
}_{\hat{B}_{h}}
\hat{p}_{h}.
\end{equation}
Therefore, \eqref{eq:model-error} holds if \eqref{eq:model-siso-hat} holds.
Let us partition \eqref{eq:model-siso-hat} as
\begin{equation} \label{eq:model-siso-hat-i}
\begin{aligned}
\dot{\hat{x}}_{p}^{(i)} 
& = 
A_{z}^{(i)}
\hat{x}_{p}^{(i)} 
\\
& + 
\underbracket{
\setlength\arraycolsep{3pt}
\begin{bmatrix} 
A_{u}^{(i)} & A_{y}^{(i)} + B_{z}^{(i)} A_{a}^{(i)}
\end{bmatrix} 
}_{\hat{A}_{h}^{(i)}}
h
+ 
\underbracket{
\setlength\arraycolsep{3pt}
\begin{bmatrix} 
B_{h}^{(i)} &   B_{z}^{(i)} B_{a}^{(i)} 
\end{bmatrix}
}_{\hat{B}_{h}^{(i)}}
\hat{p}_{h}
\end{aligned}
\end{equation}
for $i = 1, \, 2, \, \dots, \, n_y$.
Since $E_{h(j)}^{(i)} \triangleq \mathcal{C}_{h(j)}^{(i)}  {\mathcal{C}_{z}^{(i)}}^{-1}$ and $E_{p(j)}^{(i)} \triangleq \mathcal{C}_{p(j)}^{(i)}  {\mathcal{C}_{z}^{(i)}}^{-1}$, we derive from \eqref{eq:estimator-h} and \eqref{eq:estimator-p} that
\begin{equation} \label{L-1-1}
\begin{aligned} 
\dot{E}_{h(j)}^{(i)} & = A_{z}^{(i)\top} E_{h(j)}^{(i)} + h_{(j)} \cdot \I, \quad \text{for} \ j = 1, \, 2, \, \dots, \, n_{h}, \\
\dot{E}_{p(j)}^{(i)} & = A_{z}^{(i)\top} E_{p(j)}^{(i)} + \hat{p}_{h(j)} \cdot \I, \quad \text{for} \ j = 1, \, 2, \, \dots, \, n_{\uptheta}.
\end{aligned}
\end{equation}
Notice that $E_{h(j)}^{(i)} A_{z}^{(i)\top} = A_{z}^{(i)\top} E_{h(j)}^{(i)}$ and $E_{p(j)}^{(i)} A_{z}^{(i)\top} = A_{z}^{(i)\top} E_{p(j)}^{(i)}$. 
By taking the transpose of \eqref{L-1-1}, we obtain 
\begin{equation} \label{L-1-2}
\begin{aligned} 
\dot{E}_{h(j)}^{(i)\top} & = A_{z}^{(i)} E_{h(j)}^{(i)\top} + h_{(j)} \cdot \I, \quad \text{for} \ j = 1, \, 2, \, \dots, \, n_{h}, \\
\dot{E}_{p(j)}^{(i)\top} & = A_{z}^{(i)} E_{p(j)}^{(i)\top} + \hat{p}_{h(j)} \cdot \I, \quad \text{for} \ j = 1, \, 2, \, \dots, \, n_{\uptheta}.
\end{aligned}
\end{equation}
Then, we obtain \eqref{eq:model-siso-hat-i} by substituting \eqref{L-1-2} into the time derivative of 
\begin{equation} \label{L-1-3}
\begin{aligned}
\hat{x}_p ^ {(i)}
=
\sum_{j}^{n_{h}} E_{h(j)}^{(i)\top} \hat{A}_{h(j)}^{(i)} + \sum_{j}^{n_{\uptheta}} E_{p(j)}^{(i)\top} \hat{B}_{h(j)}^{(i)}
\quad
\text{for} \ i = 1, \, \cdots, \, n_y.
\end{aligned}
\end{equation}
which leads to \eqref{eq:xs}.
\end{proof}

In Lemma~\ref{lemma:xs}, $\hat{x}_p$ is affine in the uncertain model parameters of $\Theta_h$ and $\Theta_a$, which are defined in \eqref{eq:ths-qip-1}.
Let us express \eqref{eq:xs} as
\begin{equation} \label{eq:xs-bar}
\begin{aligned}
& \hat{x}_p = 
\bar{x}_p + \tilde{x}_p
\\
& \bar{x}_p \triangleq 
\sum_{j}^{n_{h}} E_{h(j)}^{\top} \hat{A}_{h(j)}
\qquad
\tilde{x}_p  \triangleq 
\sum_{j}^{n_{\uptheta}} E_{p(j)}^{\top} \hat{B}_{h(j)}
\end{aligned}
\end{equation}
where $\bar{x}_{p}$ is considered as a nominal value of plant state estimate based on the nominal model parameters of $\Theta_{h}$ and $\Theta_{a}$.
Since we know the value of $E_{h(j)}$ from the sensitivity filters in \eqref{eq:estimator-h}, $\bar{x}_p$ is available to us.
However, we do not know the exact value of $\tilde{x}_p$ because $\hat{p}_{h}$ and $E_{p(j)}$ are unknown.
In Lemma~\ref{lemma:ellipsoid}, we will demonstrate that the feasible values of $\tilde{x}_p$ lie within a norm-bounded state space region defined by $E_{q(j)}$. 
Since $\hat{q}_{h} = \hat{C}_{h} h$, the values of $E_{q(j)}$ can also be derived from the sensitivity filters in \eqref{eq:estimator-h}.

\begin{lemma} \label{lemma:ellipsoid}
Supposing that $E_{p(j)}$ and $E_{q(j)}$, as defined in \eqref{eq:Es}, are the solutions of
\begin{equation} \label{L-3-1}
\begin{aligned} 
\dot{E}_{p(j)} & = A_{z}^{\top} E_{p(j)} + \hat{p}_{h(j)} \cdot \I, \quad \text{for} \ j = 1, \, 2, \, \cdots, \, n_{\uptheta}, \\
\dot{E}_{q(j)} & = A_{z}^{\top} E_{q(j)} + \hat{q}_{h(j)} \cdot \I, \quad \text{for} \ j = 1, \, 2, \, \cdots, \, n_{\uptheta}.
\end{aligned}
\end{equation}
then $\tilde{x}_p$ in \eqref{eq:xs-bar} satisfies
\begin{equation} \label{eq:ellipsoid-tilde}
\tilde{x}_p ^ \top \tilde{x}_p \leq n_{\uptheta} \cdot \upphi,
\end{equation}
where 
\begin{equation} \label{eq:ellipsoid-variable}
\begin{aligned}
&
\upphi
\triangleq 
\sum\limits_{j}^{n_{\uptheta}}
\hat{B}_{h(j)}^{\top} E_{q(j)} E_{q(j)}^{\top} \hat{B}_{h(j)}.
\end{aligned}
\end{equation}
\end{lemma}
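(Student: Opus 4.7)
The plan is to exploit the linearity of the sensitivity filters in \eqref{L-3-1} together with the parametric relation $\hat{p}_{h(j)} = \updelta_{(j)} \hat{q}_{h(j)}$ implied by \eqref{eq:ths-qip-1}, where each $\updelta_{(j)}$ is a constant scalar with $|\updelta_{(j)}| \leq 1$. The key intermediate identity I would establish first is that $E_{p(j)}(t) = \updelta_{(j)} E_{q(j)}(t)$ holds for every $j$ and every $t \geq 0$. Under the standard zero-initialization of the sensitivity filters \eqref{eq:estimator-p} and \eqref{eq:estimator-q}, the difference $\tilde{E}_j \triangleq E_{p(j)} - \updelta_{(j)} E_{q(j)}$ starts at zero and satisfies the unforced homogeneous equation $\dot{\tilde{E}}_j = A_z^\top \tilde{E}_j$, because the two forcing terms $\hat{p}_{h(j)} \I$ and $\updelta_{(j)} \hat{q}_{h(j)} \I$ cancel exactly by time-invariance of $\updelta_{(j)}$. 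Hence $\tilde{E}_j \equiv \0$.

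With this identity in hand, the definition of $\tilde{x}_p$ in \eqref{eq:xs-bar} rewrites as $\tilde{x}_p = \sum_{j}^{n_{\uptheta}} \updelta_{(j)} E_{q(j)}^\top \hat{B}_{h(j)}$, i.e.\ a weighted linear combination of the observable vectors $v_j \triangleq E_{q(j)}^\top \hat{B}_{h(j)} \in \R^{n}$ with scalar weights $\updelta_{(j)}$. The triangle inequality in $\R^{n}$ followed by Cauchy--Schwarz in $\R^{n_{\uptheta}}$ then yields
\begin{equation*}
\tilde{x}_p^\top \tilde{x}_p \leq \Big(\sum_{j}^{n_{\uptheta}} \abs{\updelta_{(j)}} \norm{v_j}\Big)^{2} \leq \Big(\sum_{j}^{n_{\uptheta}} \updelta_{(j)}^{2}\Big)\Big(\sum_{j}^{n_{\uptheta}} \norm{v_j}^{2}\Big),
\end{equation*}
after which $\updelta_{(j)}^{2} \leq 1$ gives $\sum_{j}^{n_{\uptheta}} \updelta_{(j)}^{2} \leq n_{\uptheta}$, while the remaining factor coincides with $\upphi$ by \eqref{eq:ellipsoid-variable}.

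The substantive step is the first one: although the scalar $\updelta_{(j)}$ and the driving signal $\hat{p}_{h(j)}$ are individually unknown to the estimator, linearity of the filter in its input combined with time-invariance of $\updelta_{(j)}$ permits the unknown factor to be pulled outside the filter, reducing the uncertainty in $\tilde{x}_p$ to a single bounded scalar per channel. The subsequent Cauchy--Schwarz step then converts this bounded scalar uncertainty into the aggregate factor $n_{\uptheta}$ in front of $\upphi$. If the $\updelta_{(j)}$ were time-varying, the cancellation in $\dot{\tilde{E}}_j$ would break down and the identity $E_{p(j)} = \updelta_{(j)} E_{q(j)}$ would fail, so the constancy of the DNLDI parameters, consistent with the identification setting of \cite{pan1996parameter}, is essential to the argument.
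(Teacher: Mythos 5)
Your proof is correct and follows essentially the same route as the paper's: the paper likewise pulls the constant $\hat{\updelta}_{(j)}$ through the linear filter (via the integral representation $E_{p(j)} = \int_{0}^{t} \hat{p}_{h(j)} \cdot \ep^{A_z^\top (t-\uptau)}\,\df\uptau$ with zero initialization, yielding $E_{p(j)}E_{p(j)}^\top \preceq E_{q(j)}E_{q(j)}^\top$ and hence the per-channel bound $\tilde{x}_{p(j)}^\top\tilde{x}_{p(j)} \leq \upphi_{(j)}$), and then obtains the factor $n_{\uptheta}$ by summing the Schur-complement certificates $\begin{bmatrix}\upphi_{(j)} & \star \\ \tilde{x}_{p(j)} & \I\end{bmatrix} \succeq \0$, which is exactly your triangle-inequality-plus-Cauchy--Schwarz aggregation written in LMI form. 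Your closing observation that time-invariance of $\updelta_{(j)}$ is what allows the unknown factor to escape the filter is precisely the (implicit) crux of the paper's argument as well.
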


\begin{proof}
Let us define 
\begin{equation}
\begin{aligned}
\upphi_{(j)}
\triangleq 
\hat{B}_{h(j)}^{\top} E_{q(j)} E_{q(j)}^{\top} \hat{B}_{h(j)},
\quad\quad
\tilde{x}_{p(j)} 
=
E_{p(j)}^{\top} \hat{B}_{h(j)},
\\
\text{for} \ j = 1, \, 2, \, \cdots, \, n_{\uptheta}.
\end{aligned}
\end{equation}
Since $E_{p(j)} = \int_{0} ^ {t} \hat{p}_{h(j)} \cdot \ep ^ {A_{z}^{\top} (t - \uptau)} \df \uptau$ and $E_{q(j)} = \int_{0} ^ {t} \hat{q}_{h(j)} \cdot \ep ^ {A_{z}^{\top} (t - \uptau)} \df \uptau$ are the solutions of \eqref{L-3-1}, we obtain
\begin{equation} \label{L-1-4}
\begin{aligned} 
E_{p(j)} E_{p(j)}^{\top} = \int_{0} ^ {t} \int_{0} ^ {t} \hat{p}_{h(j)} \cdot \hat{p}_{h(j)} \cdot \ep ^ {A_{z}^{\top} (t - \uptau_1)} \ep ^ {A_{z} (t - \uptau_2)} \df \uptau_1 \df \uptau_2, \\
E_{q(j)} E_{q(j)}^{\top} = \int_{0} ^ {t} \int_{0} ^ {t} \hat{q}_{h(j)} \cdot \hat{q}_{h(j)} \cdot \ep ^ {A_{z}^{\top} (t - \uptau_1)} \ep ^ {A_{z} (t - \uptau_2)} \df \uptau_1 \df \uptau_2, \\
\text{for} \ j = 1, \, 2, \, \cdots, \, n_{\uptheta}.
\end{aligned}
\end{equation}
where $\hat{p}_{h(j)} = \hat{\updelta}_{(j)} \hat{q}_{h(j)}$ with $\abs{\hat{\updelta}_{(j)}} \leq 1$. Thus, $E_{p(j)} E_{p(j)}^{\top} \preceq E_{q(j)} E_{q(j)}^{\top}$, which leads to $\tilde{x}_{p(j)} ^ \top \tilde{x}_{p(j)} \leq \upphi_{(j)}$ for $j = 1, \, \cdots, \, n_{\uptheta}$.
Through the Schur complement, 
$\begingroup \begin{bmatrix} \upphi_{(j)} & \star \\ \tilde{x}_{p(j)} & \I \end{bmatrix} \endgroup \succeq \0$ for $j = 1, \, \cdots, \, n_{\uptheta}$.
Then, we have 
\begin{equation}
\begingroup
\begin{bmatrix}
\upphi & \star \\
\tilde{x}_{p} & n_{\uptheta} \cdot \I
\end{bmatrix}
=
\sum\limits_{j}^{n_{\uptheta}}
\begin{bmatrix}
\upphi_{(j)} & \star \\
\tilde{x}_{p(j)} & \I
\end{bmatrix}
\endgroup
\succeq \0,
\end{equation}
which leads to \eqref{eq:ellipsoid-tilde}.
\end{proof}

Lemma~\ref{lemma:xs} and \ref{lemma:ellipsoid} help us understand the uncertainty of our state estimate due to the external input $d$ and the unknown plant parameters.
Next, we will show how to find a measurable upper bound for our barrier function $\mathbf{B}_\mathtt{C}$ using Lemma~\ref{lemma:xs} and \ref{lemma:ellipsoid}.

\subsection{Barrier Function Estimation} \label{sec:estimation}

According to \cite{he2020robust}, $\mathbf{B}_\mathtt{C}(\star)$ is a vector norm function
\begin{equation} \label{eq:norms}
\norm{\star}_\mathtt{C} \triangleq \mathbf{B}_\mathtt{C} ( \star ).
\end{equation}
In Proposition~\ref{prop:b-bar}, we will use Lemma~\ref{lemma:xs}, Lemma~\ref{lemma:ellipsoid}, and the triangle inequality of $\norm{\star}_\mathtt{C}$ to find an upper bound $\bar{\mathbf{B}}_\mathtt{C}$ of our barrier function $\mathbf{B}_\mathtt{C} (x_\mathtt{CL})$.

\begin{proposition} \label{prop:b-bar}
Suppose $\Sigma_p$ is a system described in \eqref{eq:model-0}, $\Sigma_e$ is its identifier-based estimator in the form of \eqref{eq:estimator-h} and $(\mathbf{B}_\mathtt{C}, \, \Sigma_k)$ is its barrier pair, where $\mathbf{B}_\mathtt{C}$ is a composite barrier function in the form of \eqref{eq:Bc} and $\Sigma_k$ is a controller in the form of \eqref{eq:controller-siso}.
Let us define
\begin{equation} \label{eq:V-bar}
\bar{\mathbf{B}}_\mathtt{C} \triangleq 
\bar{r}_\mathtt{CL} + \tilde{r}_p + r_{e},
\end{equation}
where $\bar{r}_\mathtt{CL}, \, \tilde{r}_p, \, r_{e} \geq 0$.
Then $\mathbf{B}_\mathtt{C} (x_\mathtt{CL}) \leq \bar{\mathbf{B}}_\mathtt{C}$ is fulfilled if the following conditions hold:
\begin{itemize}
\item[(a)]
\vspace{2pt}
There exist $\upgamma_0, \, \upgamma_1, \, \cdots, \, \upgamma_{n_{\upgamma}} \geq 0$ such that $\sum\limits_{i = 0} ^ {n_{\upgamma}} \upgamma_i = 1$ and 
\begin{flalign}
\begin{bmatrix}
\bar{r}_\mathtt{CL} ^ 2 & \star \\
\bar{x}_\mathtt{CL} & \sum_{i = 0} ^ {n_{\upgamma}} \upgamma_i Q_i
\end{bmatrix}
\succeq \0. \label{eq:lmi-estimator-1} &&
\end{flalign}
\item[(b)]
Assuming $E_{q(j)}$, as defined in \eqref{eq:Es}, is the solution of $\dot{E}_{q(j)} = A_{z}^{\top} E_{q(j)} + \hat{q}_{h(j)} \cdot \I$ for $j = 1, \, 2, \, \cdots, \, n_{\uptheta}$,
\begin{flalign}
\tilde{r}_p ^ 2 \cdot \I - n_{\uptheta} \cdot \upphi \cdot X_0 \succeq \0, &&
\end{flalign}
where $\upphi$ is defined in \eqref{eq:ellipsoid-variable}.
\item[(c)]
Assuming that $d \in \mathcal{D}$, there exist $\hat{\upmu}_e = \sum_{i} ^ {n_d} \hat{\upmu}_{d(i)} \bar{d}_{(i)} ^ 2$, $\hat{\upmu}_{d(1)}, \ \hat{\upmu}_{d(2)}, \ \dots, \ \hat{\upmu}_{d(n_d)} \geq 0$, and $\hat{\upmu}_{p(1)}, \, \hat{\upmu}_{p(2)}, \, \dots, \, \hat{\upmu}_{p(n_{\uptheta})} \geq 0$ such that
\begin{flalign}
\resizebox{0.9\columnwidth}{!}{$
\begingroup
\setlength\arraycolsep{-0pt}
\begin{bmatrix}
A_z ^ \top X_0 + X_0 A_z + \frac{\hat{\upmu}_e}{r_e ^ 2} X_0 & \star & \star \\
\hat{A}_{d} ^ {\top} X_0 & - \hat{M}_{d} + \hat{C}_{d} ^ {\top} \hat{M}_p \hat{C}_{d} & \star \\
\hat{B}_{d} ^ {\top} X_0 & \0 & - \hat{M}_{p}
\end{bmatrix}
\endgroup
\prec \0.
$}  &&
\label{eq:lmi-estimator-3}
\end{flalign}
\end{itemize}
Here, $\bar{x}_\mathtt{CL} \triangleq \begin{bmatrix} \bar{x}_{p} \\ x_{k} \end{bmatrix}$, $\hat{M}_d \triangleq \diag \{ \hat{\upmu}_{d(1)}, \ \hat{\upmu}_{d(2)}, \ \dots, \ \hat{\upmu}_{d(n_{d})} \}$, and $\hat{M}_p \triangleq \diag \{ \hat{\upmu}_{p(1)}, \ \hat{\upmu}_{p(2)}, \ \dots, \ \hat{\upmu}_{p(n_{\uptheta})} \}$.
\end{proposition}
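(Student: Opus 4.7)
The plan is to apply the triangle inequality of the vector norm $\norm{\cdot}_\mathtt{C} \triangleq \mathbf{B}_\mathtt{C}(\cdot)$ from \eqref{eq:norms} to a three-way decomposition of the closed-loop state. From Lemma~\ref{lemma:xs} together with \eqref{eq:xs-bar}, I can write $x_p = \bar x_p + \tilde x_p + e$, so
\begin{equation*}
x_\mathtt{CL} = \bar x_\mathtt{CL} + \begin{bmatrix} \tilde x_p \\ \0 \end{bmatrix} + \begin{bmatrix} e \\ \0 \end{bmatrix}.
\end{equation*}
Taking $\mathbf{B}_\mathtt{C}$ of both sides and invoking subadditivity collapses the proof into three independent bounding tasks, one for each summand, matched precisely to conditions (a), (b), (c) of the statement.

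For (a), I will use the fact that \eqref{eq:Bc} gives $\mathbf{B}_\mathtt{C}(\bar x_\mathtt{CL})^2 \leq \bar x_\mathtt{CL}^\top Q_\upgamma^{-1} \bar x_\mathtt{CL}$ for every $\upgamma \in \Gamma$. For the particular convex combination supplied by the hypothesis, a Schur complement applied to \eqref{eq:lmi-estimator-1} yields $\bar x_\mathtt{CL}^\top Q_\upgamma^{-1} \bar x_\mathtt{CL} \leq \bar r_\mathtt{CL}^2$, hence $\mathbf{B}_\mathtt{C}(\bar x_\mathtt{CL}) \leq \bar r_\mathtt{CL}$. For (b), the inequality $\mathbf{B}_\mathtt{C} \leq \mathbf{B}_0$ (again from \eqref{eq:Bc}) combined with the block structure of $P_0$ from \eqref{eq:PQ} reduces the task to bounding $\tilde x_p^\top X_0 \tilde x_p$, because the controller-state component of the middle summand vanishes. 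Lemma~\ref{lemma:ellipsoid} supplies $\tilde x_p^\top \tilde x_p \leq n_{\uptheta} \upphi$, and condition (b), read as $X_0 \preceq (\tilde r_p^2 / (n_{\uptheta} \upphi)) \I$, then gives $\tilde x_p^\top X_0 \tilde x_p \leq \tilde r_p^2$.

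The bound for the error term in (c) follows the same template: I first pass from $\mathbf{B}_\mathtt{C}(e_\mathtt{CL})$ to $\sqrt{e^\top X_0 e}$ and then show that $e^\top X_0 e \leq r_e^2$ is forward-invariant along the error dynamics \eqref{eq:model-error}. To establish this invariance I redo the S-procedure argument of Proposition~\ref{prop:proposition-1}, but now with the Lyapunov candidate $V(e) = e^\top X_0 e$ and the uncertainty relations $\hat q_d = \hat C_d d$, $\hat p_d = \diag\{\Delta_h, \Delta_a\} \hat q_d$, together with $d_{(i)}^2 \leq \bar d_{(i)}^2$ and $\hat p_{d(j)}^2 \leq \hat q_{d(j)}^2$. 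Introducing the multipliers $\hat\upmu_{d(i)}$, $\hat\upmu_{p(j)}$ and the normalization $\hat\upmu_e = \sum_i \hat\upmu_{d(i)} \bar d_{(i)}^2$, and then eliminating the term quadratic in $\hat p_d$ via a Schur complement on $\hat M_p$, produces exactly \eqref{eq:lmi-estimator-3}.

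The step I expect to require the most care is (c). Unlike in Proposition~\ref{prop:proposition-1}, the uncertainty channel here has ``output'' $\hat q_d$ that depends on $d$ alone and not on the state $e$, which is why the multiplier $\hat M_p$ appears folded into the disturbance block as $\hat C_d^\top \hat M_p \hat C_d$ rather than being split across the state and disturbance blocks. Tracking these structural differences while arriving at precisely the form \eqref{eq:lmi-estimator-3}, and verifying that the S-procedure normalization $\hat\upmu_e = \sum_i \hat\upmu_{d(i)} \bar d_{(i)}^2$ correctly converts the inequality $\dot V < 0$ on $\{V \geq r_e^2\}$ into the stated LMI, is the one place where careful bookkeeping is essential; the other two bounds reduce to one-line Schur complement manipulations once the decomposition above is set up.
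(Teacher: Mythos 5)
Your proposal is correct and follows essentially the same route as the paper's proof: the decomposition $x_\mathtt{CL} = \bar{x}_\mathtt{CL} + S_p^\top \tilde{x}_p + S_p^\top e$ with the triangle inequality of $\norm{\cdot}_\mathtt{C}$, the Schur complement on \eqref{eq:lmi-estimator-1} for (a), the reduction to $\norm{\tilde{x}_p}_{X_0}$ via the block structure of $P_0$ combined with Lemma~\ref{lemma:ellipsoid} for (b), and the S-procedure on the error dynamics \eqref{eq:model-error} for (c) are all exactly the paper's steps. Your observation that the uncertainty channel in (c) has output $\hat{q}_d = \hat{C}_d d$ depending only on $d$, which folds $\hat{M}_p$ into the disturbance block as $\hat{C}_d^\top \hat{M}_p \hat{C}_d$, correctly anticipates the structure of \eqref{eq:lmi-estimator-3}.
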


\begin{proof}
Since $x_p = \bar{x}_p + \tilde{x}_p + e$, $x_\mathtt{CL} \triangleq \begin{bmatrix} x_{p} \\ x_{k} \end{bmatrix}$ and $\bar{x}_\mathtt{CL} \triangleq \begin{bmatrix} \bar{x}_{p} \\ x_{k} \end{bmatrix}$, we have $x_\mathtt{CL} = \bar{x}_\mathtt{CL} + S_p ^ \top \tilde{x}_p + S_p ^ \top e$. 
Using the triangle inequality of $\norm{\star}_\mathtt{C}$, we obtain
\begin{equation} \label{eq:proof-3}
\mathbf{B}_\mathtt{C} (x_\mathtt{CL})
=
\norm{x_\mathtt{CL}}_\mathtt{C}
\leq 
\norm{\bar{x}_\mathtt{CL}}_\mathtt{C} + \norm{S_p ^ \top \tilde{x}_p}_\mathtt{C} + \norm{S_p ^ \top e}_\mathtt{C},
\end{equation}
Based on the definition of $\norm{\star}_\mathtt{C} \triangleq \mathbf{B}_\mathtt{C} ( \star )$ in \eqref{eq:Bc},
\begin{alignat}{4}
\norm{S_p ^ \top \tilde{x}_p}_\mathtt{C}
& \leq 
&
\norm{S_p ^ \top \tilde{x}_p}_{P_0}
= 
&&
\norm{\tilde{x}_p}_{X_0},
&
\\
\norm{S_p ^ \top e}_\mathtt{C}
& \leq 
&
\norm{S_p ^ \top e}_{P_0} 
= 
&&
\norm{e}_{X_0}.
&
\end{alignat}
Therefore, $\mathbf{B}_\mathtt{C} (x_\mathtt{CL}) \leq \bar{\mathbf{B}}_\mathtt{C}$ if 
\begin{equation} \label{eq:prop-3-all}
\norm{ \bar{x}_\mathtt{CL} }_\mathtt{C} + \norm{ \tilde{x}_p }_{X_0} + \norm{e}_{X_0} \leq \bar{r}_\mathtt{CL} + \tilde{r}_p + r_{e}.
\end{equation}
In particular, \eqref{eq:prop-3-all} holds if
\begin{align} \label{eq:prop-3-3}
\norm{ \bar{x}_\mathtt{CL} }_\mathtt{C} \leq \bar{r}_\mathtt{CL}, \quad
\norm{ \tilde{x}_p }_{X_0} \leq \tilde{r}_p, \quad
\text{and} \quad
\norm{e}_{X_0} \leq r_{e}.
\end{align}
In the remainder of this proof, we will demonstrate how conditions (a), (b), and (c) in Proposition~\ref{prop:b-bar} guarantee the fulfillment of the inequalities in \eqref{eq:prop-3-3}.

Based on the definition of $\norm{\star}_\mathtt{C} \triangleq \mathbf{B}_\mathtt{C} ( \star )$ in \eqref{eq:Bc}, $\norm{ \bar{x}_\mathtt{CL} }_\mathtt{C} \leq \bar{r}_\mathtt{CL}$ if 
\begin{equation} \label{eq:prop-3-1}
\bar{x}_\mathtt{CL} ^ \top Q_{\upgamma} ^ {- 1} \bar{x}_\mathtt{CL} \leq \bar{r}_\mathtt{CL} ^ 2,
\end{equation}
where $Q_{\upgamma} \triangleq \sum_{i = 0} ^ {n_{\upgamma}} \upgamma_i Q_i$, $\sum_{i = 0} ^ {n_{\upgamma}} \upgamma_i = 1$, and $\upgamma_i \geq 0$ for all $i = 0, \, 1, \, \cdots, \, n_{\upgamma}$. 
Using the Schur complement, we have \eqref{eq:prop-3-1} holds if condition (a) in Proposition~\ref{prop:b-bar} holds.

According to Lemma~\ref{lemma:ellipsoid}, $\tilde{x}_p ^ \top \tilde{x}_p \leq n_{\uptheta} \cdot \upphi$. 
Since $\norm{ \tilde{x}_p }_{X_0} \leq \tilde{r}_p$ is equivalent to $\tilde{x}_p ^ \top X_0 \tilde{x}_p \leq \tilde{r}_p ^ 2$, we have $\norm{ \tilde{x}_p }_{X_0} \leq \tilde{r}_p$ if
\begin{equation} \label{eq:ellipsoid-tilde-2}
\left\{ 
\tilde{x}_p :
\tilde{x}_p ^ \top \tilde{x}_p \leq n_{\uptheta} \cdot \upphi
\right\}
\subseteq
\left\{ 
\tilde{x}_p :
\tilde{x}_p ^ \top X_0 \tilde{x}_p \leq \tilde{r}_p ^ 2
\right\},
\end{equation}
which leads to condition (b) of Proposition~\ref{prop:b-bar}.

According to Lemma~\ref{lemma:xs}, we have $e$ satisfies \eqref{eq:model-error}.
Therefore, $\norm{e}_{X_0} \leq r_{e}$ for all $d$ that $d \in \mathcal{D}$ if and only if $\frac{\df \, \norm{e}_{X_0} ^ 2}{\df \, t} < 0$, or equivalently
\begin{equation} \label{eq:B-dot-1}
\begin{bmatrix}
e \\
d \\
\hat{p}_{d}
\end{bmatrix}
^ \top 
\begin{bmatrix*}
A_z ^ \top X_0 + X_0 A_z & \star & \star \\
\hat{A}_{d} ^ {\top} X_0 & \0 & \star \\
\hat{B}_{d} ^ {\top} X_0 & \0 & \0
\end{bmatrix*}
\begin{bmatrix}
e \\
d \\
\hat{p}_{d}
\end{bmatrix}
< 0
,
\end{equation}
for all $e$, $d$, and $\hat{p}_d$ that
\begin{equation} \label{eq:s-condition-2}
\begin{aligned}
e ^ \top X_0 e \geq r_e ^ 2,
\quad
& 
d_{(i)} ^ {2}
\leq \bar{d}_{(i)} ^ 2,
&&
\text{for} \ i = 1, \, 2, \, \dots, \, n_d
\\
&
\hat{p}_{d(i)} ^ {2}
\leq \hat{q}_{d(i)} ^ {2},
\quad
&&
\text{for} \ i = 1, \, 2, \, \dots, \, n_{\uptheta}.
\end{aligned} 
\end{equation}
Using the S-procedure, \eqref{eq:B-dot-1} holds under the conditions in \eqref{eq:s-condition-2} if there exist $\hat{\upmu}_{e}, \, \hat{\upmu}_{d(1)}, \, \hat{\upmu}_{d(2)}, \, \dots, \, \hat{\upmu}_{d(n_d)} \geq 0$ and $\hat{\upmu}_{p(1)}, \, \hat{\upmu}_{p(2)}, \, \dots, \, \hat{\upmu}_{p(n_{\uptheta})} \geq 0$ such that for all $e$, $d$, and $\hat{p}_d$
\begin{equation} \label{eq:LMI-lyapunov-X-1}
\begin{bmatrix}
e \\
d \\
\hat{p}_d
\end{bmatrix}
^ \top 
\Omega_{X}
\begin{bmatrix}
e \\
d \\
\hat{p}_d
\end{bmatrix}
+ \sum_{i} ^ {n_d} \hat{\upmu}_{d(i)} \bar{d}_{(i)} ^ 2
- \hat{\upmu}_e
< 0,
\end{equation}
where 
\begin{equation} \label{eq:LMI-lyapunov-X-2}
\Omega_{X} \triangleq
\begingroup
\setlength\arraycolsep{2pt}
\begin{bmatrix}
A_z ^ \top X_0 + X_0 A_z + \frac{\hat{\upmu}_e}{r_e ^ 2} X_0 & \star & \star \\
\hat{A}_{d} ^ {\top} X_0 & - \hat{M}_{d} + \hat{C}_{d} ^ {\top} \hat{M}_p \hat{C}_{d} & \star \\
\hat{B}_{d} ^ {\top} X_0 & \0 & - \hat{M}_{p}
\end{bmatrix}
\endgroup
.
\end{equation}
Therefore, we can conclude that \eqref{eq:B-dot-1} holds if $\Omega_{X} \prec \0$ and $\hat{\upmu}_e = \sum_{i} ^ {n_d} \hat{\upmu}_{d(i)} \bar{d}_{(i)} ^ 2$, which are equivalent to condition (c) in Proposition~\ref{prop:b-bar}.

Consequently, $\mathbf{B}_\mathtt{C} (x_\mathtt{CL}) \leq \bar{\mathbf{B}}_\mathtt{C}$ if conditions (a), (b), and (c) in Proposition~\ref{prop:b-bar} are satisfied.
\end{proof}

Based on the conditions (a), (b), and (c) in Proposition~\ref{prop:b-bar}, we can use convex optimizations to minimize the values of $\bar{r}_\mathtt{CL}$, $\tilde{r}_p$, and $r_{e}$ such that we obtain a tight upper bound $\bar{\mathbf{B}}_\mathtt{C}$ for $\mathbf{B}_\mathtt{C} (x_\mathtt{CL})$. 
Since $\bar{x}_\mathtt{CL}$ is available to us, \eqref{eq:lmi-estimator-1} is LMI in $\upgamma$ and $\bar{r}_\mathtt{CL} ^ 2$. Therefore, we can use convex optimization
\begin{equation} \label{eq:optimization-V-bar-1}
\begin{aligned}
& \underset{\bar{\upphi}, \upgamma_0, \upgamma_1, \dots, \upgamma_{n_{\upgamma}}}{\mathtt{minimize} \quad}
& & && \bar{\upphi} \triangleq \bar{r}_\mathtt{CL} ^ 2 \\
& \mathtt{subject \ to} & & &&
\eqref{eq:lmi-estimator-1}, \qquad \sum_{i = 0} ^ {n_{\upgamma}} \upgamma_i = 1, \\
& & & && \upgamma_0, \ \upgamma_1, \ \dots, \ \upgamma_{n_{\upgamma}} \geq 0,
\end{aligned}
\end{equation}
for minimizing $\bar{r}_\mathtt{CL}$, which is also equivalent to $\mathbf{B}_\mathtt{C} (\bar{x}_\mathtt{CL})$ according to the definition of $\mathbf{B}_\mathtt{C} ( \star )$ in \eqref{eq:Bc}.
Since $\bar{x}_\mathtt{CL}$ is in terms of our identifier-based estimator variables $E_{h(j)}$ for $j = 1, \, 2, \, \dots, \, n_{h}$, we need to solve \eqref{eq:optimization-V-bar-1} in real-time for obtaining the optimal values of $\bar{r}_\mathtt{CL}$.
For high-dimensional systems, we may need to synthesize a lot of $Q_i$ matrices, which cause the computational complexity of \eqref{eq:optimization-V-bar-1} to increase.
Since $\bar{\upphi} = \underset{\upgamma \in \Gamma}{\min} \ \bar{x}_\mathtt{CL} ^ \top Q_{\upgamma} ^ {- 1} \bar{x}_\mathtt{CL}$ for $Q_{\upgamma} \triangleq \sum_{i = 0} ^ {n_{\upgamma}} \upgamma_i Q_i$ and $\sum_{i = 0} ^ {n_{\upgamma}} \upgamma_i = 1$, we can reduce the computational time by approximating the optimal solution of \eqref{eq:optimization-V-bar-1} through interpolation. 
Supposing that ${Q}_{n_{\gamma} + 1}, \, {Q}_{n_{\gamma} + 2}, \, \dots, \, {Q}_{n_{\gamma} + \bar{n}_{\gamma}}$ are interpolated from ${Q}_{0}, \, {Q}_{1}, \, \dots, \, {Q}_{n_{\gamma}}$ for $\bar{n}_{\gamma} > 0$, we define an approximated solution of \eqref{eq:optimization-V-bar-1} as
\begin{equation}
\bar{\upphi}_\mathtt{int} \triangleq \underset{i \in \mathrm{I}_\mathtt{int}}{\min} \ \bar{x}_\mathtt{CL} ^ \top Q_{i} ^ {- 1} \bar{x}_\mathtt{CL}
\end{equation}
where $\mathrm{I}_\mathtt{int} \triangleq \{ 0, \, 1, \, \dots, \, n_{\gamma} + \bar{n}_{\gamma} \}$. Then, we have $\bar{\upphi} \leq \bar{\upphi}_\mathtt{int}$ and $\bar{\upphi} = \lim\limits_{\bar{n}_{\gamma}\to\infty} \bar{\upphi}_\mathtt{int}$.
In our example in Sec.~\ref{sec:example-mimo}, we implement this approximation for a $4$-th order MIMO system with $n_{\gamma} = 6$ and $\bar{n}_{\gamma} = 84$, which reduces the computation time of \eqref{eq:optimization-V-bar-1} to $1.1 \ \mathrm{ms}$ in a $\mathrm{Python}$ simulation.

Similarly, we can use convex optimization
\begin{equation} \label{eq:optimization-V-bar-2}
\begin{aligned}
& \underset{\tilde{\upphi}}{\mathtt{minimize}}
& & && \tilde{\upphi} \triangleq \tilde{r}_p ^ 2 \\
& \mathtt{subject \ to} & & &&
\tilde{\upphi} \cdot \I - n_{\uptheta} \cdot \upphi \cdot X_0 \succeq \0,
\end{aligned}
\end{equation}
for minimizing $\tilde{r}_p$.
Unlike \eqref{eq:optimization-V-bar-1}, the optimization problem in \eqref{eq:optimization-V-bar-2} only has one variable $\tilde{\upphi}$, which can be solved easily through $1$-dimensional line search.

If we fix the values of $\hat{\upmu}_e$, \eqref{eq:lmi-estimator-3} becomes an LMI in $B_{z}$, $\hat{M}_d$, $\hat{M}_p$, and $\frac{1}{r_e ^ 2}$.
Through convex optimization
\begin{equation} \label{eq:optimization-estimator}
\begin{aligned}
& \underset{\upphi_e, B_{z}, \hat{M}_d, \hat{M}_p}{\mathtt{maximize}}
& &  \upphi_e \triangleq \frac{1}{r_e ^ 2} \\
& \mathtt{subject \ to} & & 
\eqref{eq:lmi-estimator-3}, \
\hat{\upmu}_e = \sum_{i} ^ {n_d} \hat{\upmu}_{d(i)} \bar{d}_{(i)} ^ 2, \\
&&&
\hat{\upmu}_{d(1)}, \ \hat{\upmu}_{d(2)}, \ \dots, \ \hat{\upmu}_{d(n_d)} \geq 0, \\
&&&
\hat{\upmu}_{p(1)}, \ \hat{\upmu}_{p(2)}, \ \dots, \ \hat{\upmu}_{p(n_{\uptheta})} \geq 0, \\
\end{aligned}
\end{equation}
we minimize the value of $r_e$.
This optimization problem is solved off-line and does not create runtime computation.
Since \eqref{eq:optimization-estimator} determines the value of $B_{z}$ for minimizing $r_{e}$, we can consider it a synthesis problem for our identifier-based estimator shown in \eqref{eq:estimator-h}.

\subsection{Switching Logic of $\Sigma_s$} \label{sec:switching}

Based on the value of $\bar{\mathbf{B}}_\mathtt{C}$, we can now define the switching logic of $\Sigma_s$ (Fig.~\ref{fig:switching}).
Let us define two thresholds $\ubar{\upvarepsilon}$ and $\bar{\upvarepsilon}$ (with $\upvarepsilon < \ubar{\upvarepsilon} < \bar{\upvarepsilon} \leq 1$).
$\Sigma_s$ switches from the original input $u = \hat{u}$ to $u = \Sigma_k (y)$ if $\bar{\mathbf{B}}_\mathtt{C} \geq \bar{\upvarepsilon}$ and switches back to $u = \hat{u}$ if $\bar{\mathbf{B}}_\mathtt{C} < \ubar{\upvarepsilon}$.
Note that $\bar{\mathbf{B}}_\mathtt{C}$ may not consistently decrease throughout the operation of $\Sigma_k$. 
However, according to Proposition~1, $x_\mathtt{CL}$ converges to residual set $\{ x_\mathtt{CL} : \mathbf{B}_\mathtt{C} (x_\mathtt{CL}) \leq \upvarepsilon \}$ when $u = \Sigma_k (y)$. 
Hence, under the control of $\Sigma_k$, the actual value of $\mathbf{B}_\mathtt{C} (x_\mathtt{CL})$ is ensured to be no greater than $\bar{\upvarepsilon}$ and eventually descends below $\ubar{\upvarepsilon}$ within a finite time.

As discussed in \cite{thomas2018safety}, the tuning of the thresholds $\bar{\upvarepsilon}$ and $\ubar{\upvarepsilon}$ has a significant impact on the behavior of the system. 
When $\ubar{\upvarepsilon}$ approaches $\bar{\upvarepsilon}$, the safety controller strictly enforces the constraint $\mathbf{B}_\mathtt{C} \leq \bar{\upvarepsilon}$, aiming for precise adherence to safety limits. 
On the other hand, when $\ubar{\upvarepsilon}$ approaches $\upvarepsilon$, the system returns to the residue set $\{ x_\mathtt{CL} : \mathbf{B}_\mathtt{C} (x_\mathtt{CL}) \leq \upvarepsilon \}$ after encountering the safety limits.

\begin{figure}[!tbp]
\centering
\includegraphics[width=0.50\textwidth]{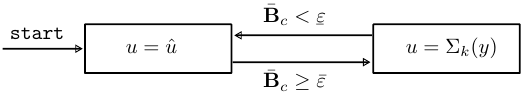}
\caption{$\Sigma_s$ switches from the original input $u = \hat{u}$ to $u = \Sigma_k (y)$ if $\bar{\mathbf{B}}_\mathtt{C} \geq \bar{\upvarepsilon}$ and switches back to $u = \hat{u}$ if $\bar{\mathbf{B}}_\mathtt{C} < \ubar{\upvarepsilon}$.}
\label{fig:switching}
\end{figure}

\begin{figure}[!tbp]
\centering
\includegraphics[width=0.20\textwidth]{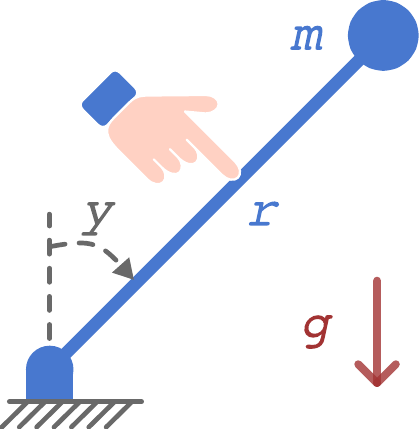}
\caption{In our first example, we consider an inverted pendulum system with $g = 9.8 \ \mathtt{m \cdot s ^ {-2}}$, $m = \frac{1}{9.8 ^ 2} \ \mathtt{kg}$, and $r = 9.8 \ \mathtt{m}$.}
\label{fig:example-siso}
\end{figure}

\begin{figure*}[!tbp]
\centering
\includegraphics[width=1.00\textwidth]{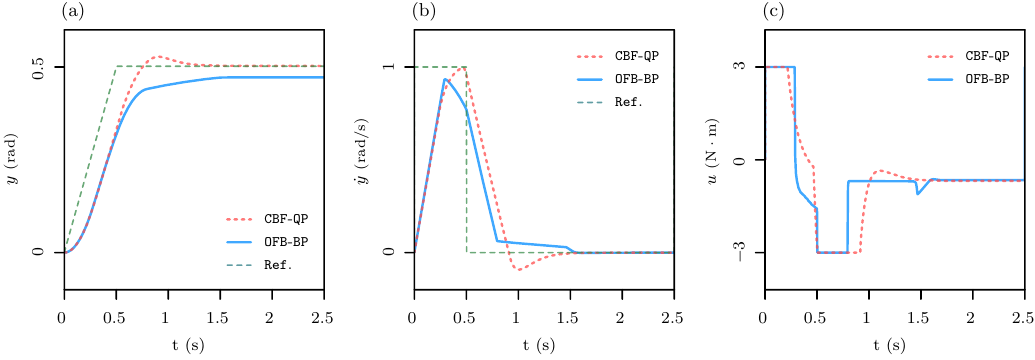}
\caption{{\bf Simulation Results of Inverted Pendulum Example}---(a)-(b) show the reference trajectories ($\mathtt{Ref.}$) of $y$ and $\dot{y}$ and the actual trajectories using the $\mathtt{CBF}$-$\mathtt{QP}$ method and our $\mathtt{OFB}$-$\mathtt{BP}$ method. (c) shows the actual input torque $u$ generated from the two methods, where we use the sigmoidal approximation in \eqref{eq:sigmoidal} with $\ubar{\upvarepsilon}=0.98$ and $\bar{\upvarepsilon}=1.00$ for the $\mathtt{OFB}$-$\mathtt{BP}$ method.}
\label{fig:result-siso}
\end{figure*}

Deviation of $\bar{\upvarepsilon}$ from $1$ reduces the actual size of the reachable state space under $\Sigma_s$, but it provides resilience against noise in the computation of $\bar{\mathbf{B}}_\mathtt{C}$. 
The gap between $\bar{\upvarepsilon}$ and $\ubar{\upvarepsilon}$ indirectly influences the rate of back-and-forth switching when the system approaches its limits.

In cases where the gap between $\bar{\upvarepsilon}$ and $\ubar{\upvarepsilon}$ is small, a sigmoidal approximation can be employed to reduce chattering. 
This approximation offers a more continuous control input $u$, mitigating the effects of rapid switching. 
Similar to the approach in \cite{hamayun2016integral}, a sigmoidal approximation can be defined as
\begin{equation} \label{eq:sigmoidal}
\begin{aligned}
u = \frac{1 - \upsigma}{2} \hat{u} + \frac{1 + \upsigma}{2} \Sigma_k (y) \quad \text{for} \quad
\upsigma = \frac{\updelta_\mathtt{C}}{\abs{\updelta_\mathtt{C}} + \frac{\updelta_{\upvarepsilon}}{20}}
\end{aligned}
\end{equation}
where $\updelta_\mathtt{C} \triangleq \bar{\mathbf{B}}_\mathtt{C} - \frac{\bar{\upvarepsilon} + \ubar{\upvarepsilon}}{2}$ represents the deviation of $\bar{\mathbf{B}}_\mathtt{C}$ from the average threshold, and $\updelta_{\upvarepsilon} \triangleq \frac{\bar{\upvarepsilon} - \ubar{\upvarepsilon}}{2}$ denotes half the difference between $\bar{\upvarepsilon}$ and $\ubar{\upvarepsilon}$.

\section{Case Studies and Benchmarking} \label{sec:example}

In this section, we illustrate the efficacy of our proposed output-feedback barrier pair ($\mathtt{OFB}$-$\mathtt{BP}$) method in safety control and barrier function estimation through examples of both SISO and MIMO systems. 
While our paper primarily addresses uncertain MIMO systems, SISO systems are also accommodated as special cases within our method. 
Specifically, the SISO example serves as a benchmark for comparison against the MIMO example, offering insight into the performance differences.
Furthermore, we compare our $\mathtt{OFB}$-$\mathtt{BP}$ approach with the CBF-based quadratic programming ($\mathtt{CBF}$-$\mathtt{QP}$) method introduced in \cite{xu2018constrained} within these examples.
Unlike our safety control scheme depicted in Fig.~\ref{fig:problem}, the $\mathtt{CBF}$-$\mathtt{QP}$ method in \cite{xu2018constrained} focuses on minimizing the discrepancy between the original input $\hat{u}$ of a system and an input $u$ designed to enforce control barrier function constraints \cite{nguyen2016exponential, ames2017control}.

\subsection{Inverted Pendulum System}

In the first example, we consider an inverted pendulum system (Fig.~\ref{fig:example-siso}, $g = 9.8 \ \mathtt{m \cdot s ^ {-2}}$, $m = \frac{1}{9.8 ^ 2} \ \mathtt{kg}$, $r = 9.8 \ \mathtt{m}$) with dynamics
\begin{equation}
\ddot{y} = \frac{g}{r} \cdot \sin (y) + \frac{1}{m \cdot r ^ 2} \cdot u + w
\end{equation}
where the variable $y$ represents the pendulum joint angle, $u$ denotes the torque input applied to the pendulum, and $w$ represents the disturbance caused by external forces exerted by a human. 
The regions $\mathcal{X}_s$, $\mathcal{U}$ and $\mathcal{D}$ are defined as
\begin{equation} \label{eq:example-siso-constraint}
\begin{aligned}
\mathcal{X}_s & \triangleq \big\{ \begin{bmatrix} \dot{y} & y \end{bmatrix} ^ \top : \ \ |y| \leq 0.5 \ \mathtt{rad}, \ \ |\dot{y}| \leq 1 \ \mathtt{rad \cdot s ^ {- 1}} \big\}, \\
\mathcal{U}   & \triangleq \big\{ u : |u| \leq 3   \ \mathtt{N \cdot m} \big\}, \\
\mathcal{D}   & \triangleq \big\{ w : |w| \leq 0.2 \ \mathtt{m \cdot s ^ {-2}} \big\}. 
\end{aligned}
\end{equation}
In the context where only the output $y$ and the input $u$ are available for measurement, the state model of this inverted pendulum system in the form of \eqref{eq:model} is
\begin{align}
\dot{x}_p
& = 
\overbracket{
\begin{bmatrix}
0 & 0 \\
1 & 0 
\end{bmatrix}
}^{\bar{A}}
x_p
+ 
\overbracket{
\begin{bmatrix}
\frac{1}{m \cdot r ^ 2} \\
0 
\end{bmatrix}
}^{\Theta_u}
u 
+
\overbracket{
\begin{bmatrix}
\uptheta_g \\
0
\end{bmatrix}
}^{\Theta_y}
y
+
\begin{bmatrix}
1 \\
0 
\end{bmatrix}
w 
\notag
\\
y & =
\underbracket{
\begin{bmatrix}
0 & 1 \\
\end{bmatrix}
}_{\bar{C}}
x_p
\label{eq:example-siso-model}
\end{align}
where $x_p \triangleq \begin{bmatrix} \dot{y} & y \end{bmatrix} ^ \top$ and $\uptheta_g = \frac{\sin (y)}{y} \cdot \frac{g}{r}$. For all $x_p \in \mathcal{X}_s$, we have $\frac{\sin (0.5)}{0.5} \cdot \frac{g}{r} \leq \uptheta_g \leq \frac{g}{r}$.

Through optimization \eqref{eq:optimization-volume}, we obtain an output-feedback barrier pair with a residue set $\{ x_\mathtt{CL} : \mathbf{B} (x_\mathtt{CL}) \leq \upvarepsilon \}$ of its barrier function with $\upvarepsilon = 0.8$. 
Then, we create an identifier-based estimator in the form of \eqref{eq:estimator-h}, which provides us a state estimate $\hat{x}_p$ to calculate a barrier function upper bound $\bar{\mathbf{B}}$ with $r_{e} = 0.04$.

The original system input $\hat{u}$ is employed as a reference tracking controller, starting from $y = 0$ and moving to $y = 0.5 \, \mathtt{rad}$ with $\dot{y} = 1 \, \mathtt{rad \cdot s^{-1}}$, after which it remains at $y = 0.5 \, \mathtt{rad}$. 
For the disturbance exerted by a human, we set $w = 0.2 \, \mathtt{m \cdot s ^ {-2}}$.

While the $\mathtt{CBF}$-$\mathtt{QP}$ method utilizes full state information, our $\mathtt{OFB}$-$\mathtt{BP}$ method relies solely on output information. 
The simulation results indicate that the $\mathtt{CBF}$-$\mathtt{QP}$ method fails to enforce the constraints of $y \leq 0.5 \, \mathtt{rad}$ and $\dot{y} \leq 1 \, \mathtt{rad \cdot s^{-1}}$ when the input saturation occurs, with $\abs{u} \leq 3 \, \mathtt{N \cdot m}$. 
Because of the LMI constraint in \eqref{eq:LMI-ub}, our $\mathtt{OFB}$-$\mathtt{BP}$ method ensures the safety of the pendulum trajectory at all times.

\subsection{Double Spring-Mass System} \label{sec:example-mimo}

\begin{figure}[!hb]
\centering
\includegraphics[width=0.40\textwidth]{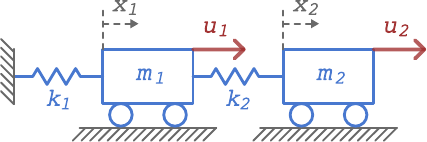}
\caption{In our second example, we consider a dual spring-mass system, where $m_1 = m_2 = 2$, $0.9 \leq k_1 \leq 1.0$ and $0.9 \leq k_2 \leq 1.0$.}
\label{fig:example-mimo}
\end{figure}

\begin{figure*}
\centering
\includegraphics[width=1.00\textwidth]{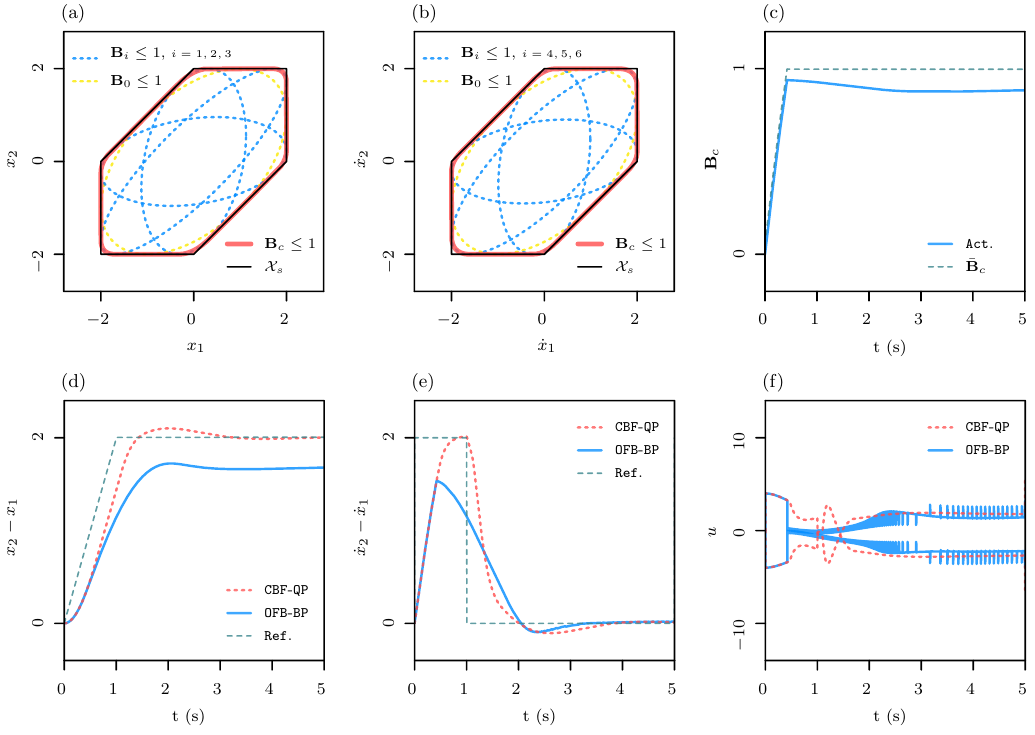}
\caption{{\bf Simulation Results of Double Spring-Mass Example}---(a)-(b) show the unit sub-level sets of $\mathbf{B}_0, \, \mathbf{B}_1, \, \mathbf{B}_2, \, \mathbf{B}_3, \, \mathbf{B}_4, \, \mathbf{B}_5, \, \mathbf{B}_6, \, \mathbf{B}_\mathtt{C}$ and $\mathcal{X}_s$ on the $x_1$-$x_2$ space and the $\dot{x}_1$-$\dot{x}_2$ space. (c)-(f) show the results of our safety control test with $v_1 = v_2 = -0.02$ and $k_1 = k_2 = 0.9$.
In particular, (c) shows the actual value ($\mathtt{Act.}$) and upper bound ($\bar{\mathbf{B}}_\mathtt{C}$) of $\mathbf{B}_\mathtt{C}$ in our $\mathtt{OFB}$-$\mathtt{BP}$ method, (d)-(e) show the reference trajectories ($\mathtt{Ref.}$) of $x_2 - x_1$ and $\dot{x}_2 - \dot{x}_1$ and the actual trajectories using the $\mathtt{CBF}$-$\mathtt{QP}$ method and our $\mathtt{OFB}$-$\mathtt{BP}$ method (f) shows the actual input torque $u$ generated from the two methods, where we use the sigmoidal approximation in \eqref{eq:sigmoidal} with $\ubar{\upvarepsilon}=0.98$ and $\bar{\upvarepsilon}=1.00$ for the $\mathtt{OFB}$-$\mathtt{BP}$ method.}
\vspace{-10pt}
\label{fig:result-mimo}
\end{figure*}

In the second example, we consider an uncertain double spring-mass system (Fig.~\ref{fig:example-mimo}, $m_1 = m_2 = 2$, $0.9 \leq k_1 \leq 1.0$, $0.9 \leq k_2 \leq 1.0$), where $x_1$ and $x_2$ are the positions of $m_1$ and $m_2$, the input $u \triangleq \begin{bmatrix} u_1 & u_2 \end{bmatrix} ^ \top$ denotes the adjustable forces exerted on $m_1$ and $m_2$. 
The output $y \triangleq \begin{bmatrix} y_1 & y_2 \end{bmatrix} ^ \top$ measures the spring forces of $k_1$ and $k_2$, while the external input $v \triangleq \begin{bmatrix} v_1 & v_2 \end{bmatrix} ^ \top$ represents the disturbances affecting the spring force measurements.
The regions $\mathcal{X}_s$, $\mathcal{U}$ and $\mathcal{D}$ are defined as
\begin{alignat}{4}
\mathcal{X}_s & \triangleq \big\{ \begin{bmatrix} \dot{x}_1 & x_1 & \dot{x}_2 & x_2 \end{bmatrix} ^ \top : 
&& |x_2 - x_1| \leq 2, \notag \\
&
&& |\dot{x}_2 - \dot{x}_1| \leq 2,  \label{eq:x-limit-example} \notag \\
&
&& |x_i| \leq 2, \ |\dot{x}_i| \leq 2, \ i = 1, \, 2 \big\}, \notag \\
\mathcal{U}   & \triangleq \big\{ \begin{bmatrix} u_1 & u_2 \end{bmatrix} ^ \top : |u_i| \leq 10, && \ i = 1, \, 2 \big\}, \\
\mathcal{D}   & \triangleq \big\{ \begin{bmatrix} v_1 & v_2 \end{bmatrix} ^ \top : |v_i| \leq 0.02, && \ i = 1, \, 2 \big\}. \notag
\end{alignat}
The state model of $\Sigma_p$ can be expressed in the form of \eqref{eq:model} as
\begin{align}
\dot{x}_p
& = 
\begingroup
\setlength\arraycolsep{3pt}
\overbracket{
\begin{bmatrix}
0 & 0 & 0 & 0 \\
1 & 0 & 0 & 0 \\
0 & 0 & 0 & 0 \\
0 & 0 & 1 & 0 
\end{bmatrix}
}^{\bar{A}}
x_p
+ 
\overbracket{
\begin{bmatrix}
\frac{1}{m_1} & 0 \\
0 & 0 \\
0 & \frac{1}{m_2} \\
0 & 0
\end{bmatrix}
}^{\Theta_{u}}
u 
\endgroup
+
\begingroup
\setlength\arraycolsep{1pt}
\overbracket{
\begin{bmatrix}
- \frac{1}{m_1} & \ \ \frac{1}{m_1} \\
0 & 0 \\
0 & - \frac{1}{m_2} \\
0 & 0
\end{bmatrix}
}^{\Theta_{y}}
(y - v) 
\endgroup
\notag
\\
y & = 
\begingroup 
\setlength\arraycolsep{3pt}
\underbracket{
\begin{bmatrix*}[c]
\ \ k_1 & 0 \\
  - k_2 & k_2
\end{bmatrix*}
}_{\Theta_{c}}
\underbracket{
\begin{bmatrix}
0 & 1 & 0 & 0 \\
0 & 0 & 0 & 1
\end{bmatrix}
}_{\bar{C}}
x_p
+ 
v
\endgroup
\label{eq:model-example-mimo}
\end{align}
where $x_p \triangleq \begin{bmatrix} \dot{x}_1 & x_1 & \dot{x}_2 & x_2 \end{bmatrix} ^ \top$.

Through optimizations \eqref{eq:optimization-volume} and \eqref{eq:optimization-direction-i}, we obtain $\Sigma_k$ with barrier functions $\mathbf{B}_0, \, \mathbf{B}_1, \, \cdots \, \mathbf{B}_6$ with $\upvarepsilon = 0.4$. 
These barrier functions allow us to build a composite barrier function $\mathbf{B}_\mathtt{C}$ such that $(\mathbf{B}_\mathtt{C}, \, \Sigma_k)$ becomes our final barrier pair. 
Fig.~\ref{fig:result-mimo}.(a)-(b) show the unit sub-level sets of $\mathbf{B}_0, \, \mathbf{B}_1, \, \cdots \, \mathbf{B}_6$ and $\mathbf{B}_\mathtt{C}$ projected onto two 2-D sub-spaces of $x_p$.
Through optimization \eqref{eq:optimization-estimator}, we create an identifier-based estimator in the form of \eqref{eq:estimator-h}, which provides us a state estimate $\hat{x}_p$ to calculate a barrier function upper bound $\bar{\mathbf{B}}_\mathtt{C}$ with $r_{e} = 0.11$.

The original system input $\hat{u}$ is employed as a reference tracking controller. 
Initially, the reference trajectory starts with $x_1 = x_2 = 0$. As $m_2$ transitions to $x_2 = 1$ with velocity $\dot{x}_2 = 1$, $m_1$ moves in the opposite direction mirroring the displacement of $m_2$. 
Consequently, the spring deflection $x_2 - x_1$ transitions from $0$ to $2$, while the spring deflection rate $\dot{x}_2 - \dot{x}_1$ is equal to $2$. 
Following this transition, $x_2 - x_1$ remains at $2$. 
In this whole process, we set $v_1 = v_2 = -0.02$ and $k_1 = k_2 = 0.9$.

In this example, both the $\mathtt{CBF}$-$\mathtt{QP}$ method and our $\mathtt{OFB}$-$\mathtt{BP}$ method utilize the estimated state $\hat{x}_p$ obtained from the identifier-based estimator, as shown in \eqref{eq:estimator-h}. 
Through simulation results, it is evident that the $\mathtt{CBF}$-$\mathtt{QP}$ method fails to satisfactorily enforce the constraints of $x_2 - x_1 \leq 2$ and $\dot{x}_2 - \dot{x}_1 \leq 2$, despite the absence of input saturation. 
This violation of safety constraints by the $\mathtt{CBF}$-$\mathtt{QP}$ method is attributed to the influence of the estimated state $\hat{x}_p$, which leads to constraint violations before the system reaches a steady state.

\subsection{Benchmarking Analysis} \label{sec:discussion}

When juxtaposing the outcomes of the SISO inverted pendulum example with those of the MIMO double spring-mass example, the latter reveals a slightly heightened level of conservatism, particularly noticeable when the system aims to reach its state-space safety limits under the operation of $u = \hat{u}$. 
This conservatism arises from various factors, including the extent of model uncertainty, reachability under input saturation, and the design of our barrier function. 
In contrast to the control barrier function methods in \cite{xu2015robustness, nguyen2016exponential, ames2017control, xu2018constrained}, our barrier function $\mathbf{B} (x_\mathtt{CL})$ does not yield the precise safety bounds defined in $\mathcal{X}_s$ for the invariant set in the $x_p$ space. 
It is possible for $x_p \in \mathcal{X}_s$ while $\mathbf{B} (x_\mathtt{CL}) > 1$. 
In such regions of the state space, there is no guarantee that the future trajectories of these plant states will remain within $\mathcal{X}_s$ when employing the safety controller $\Sigma_k$. 
To address this issue, the cost functions in optimizations \eqref{eq:optimization-volume} and \eqref{eq:optimization-direction-i} aim to maximize the invariant set of our quadratic barrier function and composite quadratic barrier function while minimizing the region where $x_p \in \mathcal{X}_s$ and $\mathbf{B}(x_\mathtt{CL}) > 1$. 
In our double spring-mass example, we extend the invariant set of the barrier function obtained from \eqref{eq:optimization-volume} in various directions to alleviate the over-conservatism of our barrier function. 
However, for systems with high dimensions, identifying the directions for expanding the composite barrier function may pose additional challenges.

The advantages of our $\mathtt{OFB}$-$\mathtt{BP}$ method compared to $\mathtt{CBF}$-$\mathtt{QP}$ are also evident. 
The results from the inverted pendulum example expose the limitations of the $\mathtt{CBF}$-$\mathtt{QP}$ method when confronted with tight input constraints, despite utilizing the true full state information. 
Specifically, the simulation results clearly demonstrate that the $\mathtt{CBF}$-$\mathtt{QP}$ method fails to satisfy the safety limit on $y$, which has a relative degree of $2$. 
This violation of the safety limit aligns with the observation made in \cite[Remark 11]{ames2017control}, which states that the CBF constraint with a relative degree equal to or greater than $2$ becomes invalid in the presence of input constraints. 
In \cite{rauscher2016constrained, barbosa2020provably, cortez2022safe}, various $\mathtt{CBF}$-$\mathtt{QP}$ methods are presented to tackle this challenge in MIMO Euler-Lagrangian systems. 
However, these methods depend on having access to the exact models and full state information of the systems. 
As shown in the dual spring-mass example, even without input saturation, the $\mathtt{CBF}$-$\mathtt{QP}$ method fails to enforce the specified constraints when using estimated state information. 
Recent advancements in CBF-based methods have introduced state estimators designed to estimate the values of CBFs, addressing challenges such as initial state errors \cite{wang2022observer} and stochastic disturbances \cite{clark2021control}. 
Nonetheless, our $\mathtt{OFB}$-$\mathtt{BP}$ method stands out for its ability to ensure safety even in scenarios involving both input saturation and partial state information.

\section{Conclusion}

In this paper, we addressed the safety control problem for an uncertain MIMO system with partial state information. 
Our approach involved the synthesis of a vector norm barrier function and a dynamic output feedback safety controller, working in tandem to ensure system safety. 
This dynamic output feedback safety controller guarantees the invariance of the barrier function even in the presence of uncertainties and disturbances inherent to the system dynamics. 
The development of our safety control synthesis method relied on a multi-objective output feedback scheme \cite{scherer1997multiobjective}, taking into account both uncertain dynamics and disturbances.

Furthermore, we introduced a robust barrier function estimator that leverages only input and output measurements to estimate an upper bound for the barrier function. 
This estimator enables reliable safety enforcement in scenarios where full state information is not available.

Importantly, our proposed methodology offers significant advantages over conventional CBF-based methods. 
While CBF-based approaches have shown effectiveness in enforcing safety constraints \cite{xu2015robustness, nguyen2016exponential, ames2017control, xu2018constrained}, they can be limited in scenarios involving tight input constraints or the use of estimated state information. 
In contrast, our method successfully addresses these limitations and ensures safety constraints are enforced, even under these adverse conditions. 
By incorporating the proposed barrier function estimator and dynamic output feedback safety controller, our approach achieves safety enforcement in uncertain MIMO systems with partial state information.

\balance

\bibliographystyle{IEEEtran}
\bibliography{main}

\end{document}